%
%
\documentclass[reqno,12pt]{amsart}




\usepackage{color} 
\usepackage{ifpdf}
\ifpdf
    \usepackage[pdftex]{graphicx}
    \usepackage[pdftex]{hyperref}
    \hypersetup{
        unicode=false,          
        pdftoolbar=true,        
        pdfmenubar=true,        
        pdffitwindow=false,     
        pdfstartview={FitH},    
        pdftitle={MCP Article},      
        pdfauthor={Michael Holst},   
        pdfsubject={Mathematics},    
        pdfcreator={Michael Holst},  
        pdfproducer={Michael Holst}, 
        pdfkeywords={PDE, analysis, mathematical physics}, 
        pdfnewwindow=true,      
        colorlinks=true,        
        linkcolor=red,          
        citecolor=blue,         
        filecolor=magenta,      
        urlcolor=cyan           
    }

    \typeout{====== Invoked by pdflatex ======================}
\else
    \usepackage{graphicx}
    \usepackage{pstricks}
    
    \newcommand{\href}[2]{#2}
    \typeout{====== Invoked by latex ======================}
\fi

\usepackage{times}
\usepackage{amsfonts}

\usepackage{amsmath}
\usepackage{amsthm}
\usepackage{amssymb}
\usepackage{amsbsy}
\usepackage{amscd}

\usepackage{enumerate}
\usepackage{verbatim}
\usepackage{subfigure}




\newtheorem{theorem}{Theorem}[section]

\newtheorem{proposition}[theorem]{Proposition}

\newtheorem{definition}[theorem]{Definition}

\newtheorem{remark}[theorem]{Remark}

\numberwithin{equation}{section}  





  \newcounter{mnote}
  \setcounter{mnote}{0}
  
  \let\oldmarginpar\marginpar
    \renewcommand\marginpar[1]{\-\oldmarginpar[\raggedleft\footnotesize #1]%
    {\raggedright\footnotesize #1}}



\definecolor{myblue}{rgb}{0.2,0.2,0.7}
\definecolor{mygreen}{rgb}{0,0.6,0}
\definecolor{mycyan}{rgb}{0,0.6,0.6}
\definecolor{myred}{rgb}{0.9,0.2,0.2}
\definecolor{mymagenta}{rgb}{0.9,0.2,0.9}
\definecolor{mywhite}{rgb}{1.0,1.0,1.0}
\definecolor{myblack}{rgb}{0.0,0.0,0.0}

\newcommand{\beq}{\begin{equation}}
\newcommand{\eeq}{\end{equation}}
\newcommand{\beqa}{\begin{eqnarray}}
\newcommand{\eeqa}{\end{eqnarray}}
%
%


%


%






%


%
     
      
%



%

%

%

%

%

%

%

%


%





\newcommand{\cA}{{\mathcal A}}

\newcommand{\cH}{{\mathcal H}}

\newcommand{\cL}{{\mathcal L}}
\newcommand{\cM}{{\mathcal M}}

\newcommand{\cO}{{\mathcal O}}

\newcommand{\cS}{{\mathcal S}}
\newcommand{\cT}{{\mathcal T}}

\newcommand{\cY}{{\mathcal Y}}



%


%


\newcommand{\bW}{{\bf W}}

\newcommand{\bh}{{\bf h}}

\newcommand{\bj}{{\bf j}}

\newcommand{\bw}{{\bf w}}
\newcommand{\bx}{{\bf x}}

\def\ee{\epsilon}

\def\la{\lambda}

\def\La{\Lambda}



%


%










\setlength{\topmargin}{-0.5in}
\setlength{\textheight}{9.5in}
\setlength{\textwidth}{5.85in}
\setlength{\oddsidemargin}{0.325in}
\setlength{\evensidemargin}{0.325in}
\setlength{\marginparwidth}{1.0in}

\setcounter{tocdepth}{2}

\begin{document}

\title[An Alternative Between Non-unique and Negative Yamabe Solutions]{An Alternative Between Non-unique and Negative Yamabe Solutions to the Conformal Formulation of the Einstein Constraint Equations}

\author[M. Holst]{Michael Holst}

\author[C. Meier]{Caleb Meier}

\address{Department of Mathematics\\
         University of California San Diego\\ 
         La Jolla CA 92093}

\thanks{MH was supported in part by 
        NSF Awards~1065972, 1217175, and 1262982.}
\thanks{CM was supported in part by NSF Award~1065972.}
\thanks{Email: 
         \hspace*{0.2cm} {\tt mholst@math.ucsd.edu},
         \hspace*{0.2cm} {\tt meiercaleb@gmail.com}
        }

\date{\today}

\keywords{Nonlinear elliptic equations, Negative Yamabe Class,
Einstein constraint equations,
Liapunov method,
bifurcation theory,
Implicit Function Theorem
}

\begin{abstract}
The conformal method has been effective for parametrizing solutions to the 
Einstein constraint equations on closed $3$-manifolds.
However, it is still not well-understood; for example, existence of
solutions to the conformal equations for zero or negative Yamabe metrics 
is still unknown without the so-called ``CMC'' or ``near-CMC'' assumptions.
The first existence results without such assumptions, termed the 
``far-from-CMC'' case, were obtained by Holst, Nagy, and Tsogtgerel in 2008
for positive Yamabe metrics.
However, their results are based on topological arguments, and as a result 
solution uniqueness is not known.
Indeed, Maxwell gave evidence in 2011 that far-from-CMC solutions 
are not unique in certain cases.
In this article, we provide further insight by establishing a type of 
alternative theorem for general far-from-CMC solutions.
For a given manifold $\cM$ that admits a metric of positive 
scalar curvature and scalar flat metric $g_0$ with no conformal Killing 
fields, we first prove existence of an analytic, one-parameter family 
of metrics $g_{\la}$ through $g_0$ such that $R(g_{\la}) = \la$.
Using this family of metrics and given data $(\tau,\sigma, \rho,\bj)$,
we form a one-parameter family of operators $F((\phi,\bw),\la)$ 
whose zeros satisfy the conformal equations.
Applying Liapnuov-Schmidt reduction, we determine an analytic solution curve 
for $F((\phi,\bw),\la) = 0$ through a critical point where the linearization 
of $F((\phi,\bw),\la)$ vanishes.
The regularity of this curve, the definition of $F((\phi,\bw),\la)$, and 
the earlier far-from-CMC results of Holst et al.\ allow us to then prove
the following alternative theorem for far-from-CMC solutions:
either
(1) there exists a $\la_1 >0$ such that (positive Yamabe) solutions to the 
conformal equations are non-unique with data 
$(g_{\la_1},\la_1^2\tau, \la_1^2\sigma,\la_1^2\rho,\la_1^2\bj)$;
or 
(2) there exists $ \la_2 < 0$ such that (negative Yamabe) solutions to the 
conformal equations exist with data 
$(g_{\la_2},\la_2^2\tau, \la_2^2\sigma,\la_2^2\rho,\la_2^2\bj)$.
\end{abstract}

\maketitle


\vspace*{-1.2cm}
\tableofcontents

\section{Introduction}

The Einstein field equation $G_{\mu\nu} = \kappa T_{\mu\nu}$ can be formulated 
as a Cauchy problem where the initial data consists of a Riemannian metric 
$\hat{g}_{ab}$ and a symmetric tensor $\hat{k}_{ab}$ on a specified 
$3$-dimensional manifold $\cM$ \cite{HE75, RW84}.
However, one is not able to freely specify such initial data.
Like Maxwell's equations, the initial data $\hat{g}_{ab}$ and 
$\hat{k}_{ab}$ must satisfy constraint equations, where the constraints 
take the form
\begin{align}
\hat{R}+\hat{k}^{ab}\hat{k}_{ab}+\hat{k}^2 = 2\kappa\hat{\rho}, \label{eq1:5aug12}\\
\hat{D}_b\hat{k}^{ab}-\hat{D}^a\hat{k} = \kappa \hat{j}^a. \label{eq2:5aug12}
\end{align}
Here, $\hat{R}$ and $\hat{D}$ are respectively the scalar curvature and covariant derivative associated with $\hat{g}_{ab}$, $\hat{k}$ is the trace of
$\hat{k}_{ab}$, and $\hat{\rho}$ and $\hat{j}^a$ are matter terms obtained by contracting $T_{\mu\nu}$ with a vector field normal
to $\cM$, where one assumes that $T_{\mu\nu}$ satisfies the dominant energy condition.  

Equation \eqref{eq1:5aug12} is known as the Hamiltonian constraint while \eqref{eq2:5aug12} is known as the momentum
constraint, and collectively they are known as the Einstein constraint equations.
These equations form an underdetermined system of four equations to be solved for twelve unknowns represented by the symmetric two index tensors $\hat{g}_{ab}$ and $\hat{k}_{ab}$.  
In order to transform the constraint equations into a determined system, one divides the unknowns into freely specifiable data and determined data using what is known as the \emph{conformal method}.  In this method, introduced by
Lichnerowicz \cite{AL44} and York \cite{JY71}, one makes the decomposition 
\begin{align}
\hat{k}_{ab} = \hat{l}_{ab}+\frac13\hat{g}_{ab}\hat{\tau},
\end{align}
where $\hat{\tau}  = \hat{k}_{ab}\hat{g}^{ab}$ is the trace and $\hat{l}_{ab}$ is the traceless part of $\hat{k}_{ab}$, and then
one makes the following conformal rescaling
\begin{align}
\hat{g}_{ab} = \phi^4g_{ab}, \quad \hat{l}_{ab} = \phi^{-10}l^{ab}, \quad \hat{\tau} = \tau. 
\end{align}  
Then, forming the decomposition
\begin{align}
l_{ab} = (\sigma_{ab}+(\cL \bw)_{ab}),
\end{align}
where $D_a\sigma^{ab} = 0$, and defining
$$
(\cL \bw)^{ab} = D^aw^b + D^bw^a-\frac23(D_cw^c)g^{ab}
$$
as the \emph{conformal Killing operator},
one obtains the conformal, transverse, traceless (CTT) formulation of the constraint equations as
\begin{align}\label{eq1:12mar13}
-\Delta\phi + &\frac{1}{8}R \phi + \frac{\la^4}{12}\tau^2\phi^5-\frac{1}{8}(\sigma+\mathcal{L}{\bw})_{ab}(\sigma+\mathcal{L}\bw)^{ab}\phi^{-7}-\frac{\kappa}{4}\rho\phi^{-3}=0,\\
&\mathbb{L}\bw +\frac{2}{3}D\tau\phi^6+\la^2\kappa\bj =0, \nonumber
\end{align}
where $\mathbb{L} \bw= -D_b(\cL \bw)^{ab}$.
The above system \eqref{eq1:12mar13}
forms a determined, coupled nonlinear system of elliptic partial differential equations with specified data $(g, \tau, \sigma, \rho,\bj)$ and with $(\phi,\bw)$ to be determined by the equations. 
For simplicity, we will refer to this system as the \emph{conformal formulation} (cf. \cite{BI04} for further discussion). 

In this paper, we address some of the open questions associated with existence and uniqueness of solutions to the conformal formulation on a closed, $3$-dimensional manifold $\cM$ in the event that the mean curvature $\tau$ does not satisfy the ``near constant" (or \emph{near-CMC})
assumptions developed by Isenberg and Moncrief in \cite{IM96}.  It is well-known that solutions to the conformal equations exist and are unique 
on a closed manifold if
the mean curvature $\tau$ does not vanish and has a bounded derivative.  However, very little is known about the existence and uniqueness
of solutions in the event that the mean curvature function does not satisfy these so-called near-CMC assumptions.  
The first ``far-from-CMC" existence results were not established until 2008 in \cite{HNT08,HNT09}, when Holst, Nagy, and Tsogtgerel showed that solutions to the conformal formulation exist for metrics in the positive Yamabe class and mean curvatures $\tau$ completely free of the near-CMC assumption, now termed the ``far-from-CMC" case.
However, there are currently no far-from-CMC existence results for metrics in the zero or negative Yamabe classes.  Furthermore,
given that the existence results in \cite{HNT08,HNT09} use a general topological fixed point theorem as opposed to the contraction
mapping theorem type arguments used in \cite{JI95, IM96}, it is not known whether far-from-CMC solutions are unique.  Indeed, Maxwell has shown
that solutions to the conformal formulation are non-unique for certain low-regularity, far-from-CMC mean curvatures in the event that the prescribed metric
lies in the zero Yamabe class (cf \cite{MaD11}).  In this article we partially address these issues by showing that either the postive Yamabe, 
far-from-CMC solutions obtained in \cite{HNT08,HNT09} are non-unique, or that negative Yamabe, far-from-CMC solutions to the conformal equations
exist for a certain family of metrics with constant, negative scalar curvature.  

To obtain our results, we consider a closed, $3$-dimensional manifold $\cM$ which admits a metric of positive scalar curvature and also admits   
a metric $g_0$ with zero scalar curvature and no conformal Killing fields.
We show that there exists a $\delta>0$ and a one-parameter family of metrics $(g_{\la})_{\la \in (-\delta,\delta)}$ on $\cM$, analytic in the variable
$\la$, such that $R(g_{\la}) = \la$ and $g_{\la}|_{\la=0}= g_0$.  Using this family of metrics, we then construct the following one-parameter family of nonlinear elliptic 
systems on the closed manifold $\cM$: 
\begin{align}\label{eq1:26oct12}
-\Delta_{\la}\phi + &\frac{1}{8}\la \phi + \frac{\la^4}{12}\tau^2\phi^5-\frac{1}{8}(\la^2\sigma+\mathcal{L}{\bw})_{ab}(\la^2\sigma+\mathcal{L}\bw)^{ab}\phi^{-7}-\frac{\la^2\kappa}{4}\rho\phi^{-3}=0,\\
&\mathbb{L}_{\la}\bw +\frac{2\la^2}{3}D_{\la}\tau\phi^6+\la^2\kappa\bj =0, \nonumber
\end{align}
where $\Delta_{\la}$, $\mathbb{L}_{\la}$ and $D_{\la}$ are the Laplace-Beltrami operator, negative divergence of the conformal Killing operator and covariant derivative with respect to the metric $g_{\la}$.  
For a fixed $\la$, we recognize the above family as the CTT formulation
of the Einstein Constraint Equations with specified data
\begin{align}
g_{\la}, \quad \tau_{\la} = \la^2 \tau, \quad \sigma_{\la} = \la^2\sigma \quad \rho_{\la} = \la^2\rho, \quad \text{and} \quad {\bj}_{\la} = \la^2 \bj.
\end{align}   
We assume that $\tau$ is an arbitrary differentiable function on $\cM$, so that $\tau$ does not satisfy the near-CMC assumptions.  
By applying some basic techniques from bifurcation theory and nonlinear functional analysis to \eqref{eq1:26oct12}, 
we are able to parametrize the solution curve of \eqref{eq1:26oct12} through $((1,{\bf 0}),0)$.  An analysis of this 
solution curve reveals that, under suitable reasonable assumptions, at least one of the following two possibilities must occur:
\begin{enumerate}\label{eq1:6nov12}
\item There exists a $\delta >0$ such that for $\la_0 \in (0,\delta)$, there exist $(\phi_{1,\la_0},\bw_{1,\la_0})$ and $(\phi_{2,\la_0},\bw_{2,\la_0})$ in $C^{2,\alpha}\oplus C^{2,\alpha}(T\cM)$ 
that together solve \eqref{eq1:26oct12} when $\la = \la_0$ with $(\phi_{1,\la_0},\bw_{1,\la_0}) \ne (\phi_{2,\la_0},\bw_{2,\la_0})$ (i.e.\ solutions to the CTT formulation
are non-unique).
\item There exists a $\delta > 0$ such that for any $\la_0 \in (-\delta, 0)$, there exists $(\phi_{\la_0},\bw_{\la_0}) \in C^{2,\alpha}\oplus C^{2,\alpha}(T\cM)$ that 
solves \eqref{eq1:26oct12} when $\la = \la_0$ (i.e.\ far-from CMC solutions
to the CTT formulation exist for certain metrics in the negative Yamabe class).
\end{enumerate}

The remainder of the paper is organized as follows.
Section~\ref{prelim} presents notation and preliminaries
that we will require to prove our results.
In particular, we first summarize some fundamental results from bifurcation 
theory.
In particular, we discuss what is known as {\bf Liapunov-Schmidt} reduction,
which is instrumental in parametrizing solutions to \eqref{eq1:26oct12} in a 
neighborhood of $((1,{\bf 0}),0)$.
We then show that a closed, $3$-dimensional manifold $\cM$ which admits a 
metric of positive scalar curvature also admits an analytic, one-parameter 
family of metrics $g_{\la}$ such that $R(g_{\la}) = \la$.
In Section \ref{sec1:1nov12}, we then use this one-parameter family of metrics 
and given data $(\tau,\sigma, \rho, \bj)$ 
for the conformal equations to define a nonlinear operator $F((\phi,\bw),\la)$ whose zeroes coincide with solutions to
the conformal equations.  The main results of this paper are then presented in Theorems~\ref{thm1:31oct12} and \ref{thm2:31oct12} in Section~\ref{sec1:1nov12}.
Theorem~\ref{thm1:31oct12} characterizes the behavior of solutions to the nonlinear problem $F((\phi,\bw),\la) = 0$ in a neighborhood of the point $((1,{\bf 0}),0)$.
This characterization allows us to conclude that either there exists $\la_0> 0$ such that solutions to $F((\phi,\bw),\la_0) = 0$ are non-unique or that there exists $\la_0 <0$
for which solutions to $F((\phi,\bw),\la_0) = 0$ exist.  Theorem~\ref{thm2:31oct12} then interprets this result in terms of the conformal equations.
It concludes that in any neighborhood of a metric $g_0$ with zero scalar curvature and no conformal Killing fields on $\cM$, that either there exists 
a metric $g_{\la}$ with $R(g_{\la}) = \la >0$ for which solutions to the conformal
equations are non-unique, or $R(g_{\la}) = \la < 0$ and negative Yamabe, far-from-CMC solutions exist.
The remainder of the paper is then devoted to proving these results.  Section~\ref{analytic} is dedicated to showing
that the operator $F((\phi,\bw),\la)$ is analytic, and then in Section~\ref{mainproof} we prove Theorems~\ref{thm1:31oct12} and \ref{thm2:31oct12}.    
We draw some conclusions in Section~\ref{conc},
and also include Appendx~\ref{app} containing some supporting results.

\section{Preliminary Material}
\label{prelim}

\subsection{Notation and Function Spaces}\label{notation}

Let $\cM$ denote a compact $3$-dimensional manifold and let $T^{r}_s\cM$ denote the vector bundle of tensors of type $(r,s)$.
In this paper,
we will consider the space of $k$-differentiable sections $C^k(T^r_s\cM)$, 
the H\"{o}lder spaces $C^{k,\alpha}(T^r_s\cM)$ where $k \in \mathbb{N},~ p \ge 1, ~\alpha \in (0,1)$, and
the Sobolev spaces $W^{k,p}(T^r_s\cM)$. 
Note that all of these spaces (see Appendix~\ref{app} for a quick summary of the standard notation we use here for norms) are Banach spaces,
and the space $W^{k,2}(T^r_s)$ is a Hilbert space for all $k$.  As in \cite{FiMa75}, we let
\begin{align}
&\cS_2^{s,p} = W^{s,p}(T_{2,\text{symmetric}}^0(\cM)) \quad \text{the symmetric 2-covariant $W^{s,p}$ tensors},\nonumber\\
&\cA^{s,p} \subset \cS_2^{s,p} \quad \text{the open set of Riemannian metrics of type $W^{s,p}$ with $s > \frac{3}{p}$}.\nonumber
\end{align}  
We will denote scalar valued functions by simply writing $C^k$, $C^{k,\alpha}$ and $W^{s,p}$. 

Using any of the above Banach spaces, one can form new Banach spaces and Hilbert spaces by considering the direct
sum (see also~\cite{CMH12}).
\begin{definition}\label{def2:24july12}
Suppose that $X_1$ and $X_2$ are Banach spaces with norms $\|\cdot\|_{X_1}$ and $\|\cdot\|_{X_2}$.
Then
the direct sum $X_1\oplus X_2$ is the vector space of ordered pairs $(x,y)$ where $x\in X_1$, $y\in X_2$ and addition and scalar multiplication
are carried out component-wise.
\end{definition}
\noindent
We have the following proposition: 
\begin{proposition}\label{prop1:20aug12}
The vector space $X_1\oplus X_2$ is a Banach space when given the norm 
\begin{align}
\|(x,y)\|_{X_1\oplus X_2} =\left(\|x\|^2_{X_1}+\|y\|^2_{X_2}\right)^{\frac12}.
\end{align}
\end{proposition}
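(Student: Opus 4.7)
The plan is to establish that $\|\cdot\|_{X_1\oplus X_2}$ is in fact a norm, and then to verify that $X_1\oplus X_2$ is complete with respect to it. Neither step is difficult; the proof is a textbook exercise in functional analysis and reduces to transporting the norm and completeness structure on the factors through the Euclidean $2$-norm on $\mathbb{R}^2$.

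First, I would verify the three norm axioms. Positive definiteness is immediate: $\|(x,y)\|_{X_1\oplus X_2}^2 = \|x\|_{X_1}^2 + \|y\|_{X_2}^2$ is a sum of nonnegative quantities that vanishes iff both $\|x\|_{X_1} = 0$ and $\|y\|_{X_2} = 0$, hence iff $(x,y) = (0,0)$. Homogeneity follows similarly by pulling the scalar through each component norm and factoring out $|\alpha|$. The triangle inequality is the only nontrivial axiom: writing $a_i = \|x_i\|_{X_1}$ and $b_i = \|y_i\|_{X_2}$ for $i=1,2$, the triangle inequalities on $X_1$ and $X_2$ yield $\|x_1+x_2\|_{X_1} \le a_1+a_2$ and $\|y_1+y_2\|_{X_2} \le b_1+b_2$, after which the Minkowski inequality in $\mathbb{R}^2$,
\begin{equation*}
\bigl((a_1+a_2)^2+(b_1+b_2)^2\bigr)^{1/2} \le (a_1^2+b_1^2)^{1/2} + (a_2^2+b_2^2)^{1/2},
\end{equation*}
gives the desired bound $\|(x_1,y_1)+(x_2,y_2)\|_{X_1\oplus X_2} \le \|(x_1,y_1)\|_{X_1\oplus X_2} + \|(x_2,y_2)\|_{X_1\oplus X_2}$.

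For completeness, I would start with an arbitrary Cauchy sequence $\{(x_n,y_n)\}$ in $X_1\oplus X_2$ and use the elementary bounds
\begin{equation*}
\|x_n-x_m\|_{X_1} \le \|(x_n-x_m,\, y_n-y_m)\|_{X_1\oplus X_2}, \qquad \|y_n-y_m\|_{X_2} \le \|(x_n-x_m,\, y_n-y_m)\|_{X_1\oplus X_2},
\end{equation*}
to conclude that $\{x_n\}$ and $\{y_n\}$ are Cauchy in $X_1$ and $X_2$ respectively. Since $X_1$ and $X_2$ are Banach spaces, there exist $x\in X_1$ and $y\in X_2$ with $x_n\to x$ and $y_n\to y$. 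Then $\|(x_n,y_n)-(x,y)\|_{X_1\oplus X_2}^2 = \|x_n-x\|_{X_1}^2 + \|y_n-y\|_{X_2}^2 \to 0$, so $(x_n,y_n) \to (x,y)$ in $X_1\oplus X_2$.

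There is no real obstacle here; the only subtlety worth flagging is to make sure the chosen norm is genuinely a norm (rather than merely an equivalent quasi-norm), which is precisely what Minkowski's inequality guarantees. One could alternatively verify the proposition by noting that the map $\Phi : X_1\oplus X_2 \to \mathbb{R}^2$ defined by $\Phi(x,y) = (\|x\|_{X_1},\|y\|_{X_2})$ pulls back the Euclidean norm on $\mathbb{R}^2$ to $\|\cdot\|_{X_1\oplus X_2}$, and then invoking Minkowski on $\mathbb{R}^2$; this perspective also makes transparent that the two equivalent norms $\max\{\|x\|_{X_1},\|y\|_{X_2}\}$ and $\|x\|_{X_1}+\|y\|_{X_2}$ would yield the same Banach space structure, justifying the specific choice of the $\ell^2$-style norm used in the paper.
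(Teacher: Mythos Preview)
Your proof is correct and follows essentially the same approach as the paper: the paper's proof is a single sentence stating that the result follows from $\|\cdot\|_{X_1}$ and $\|\cdot\|_{X_2}$ being norms and $X_1$, $X_2$ being complete, and you have simply spelled out those verifications in full detail.
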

\begin{proof}
This follows from the fact that $\|\cdot\|_{X_1}$ and $\|\cdot\|_{X_2}$ are norms and
the spaces $X_1$ and $X_2$ are complete with respect to these norms.
\end{proof}
\noindent
We have a similar proposition for Hilbert spaces.
\begin{proposition}\label{def1:24july12}
Suppose that $\cH_1$ and $\cH_2$ are Hilbert spaces with inner products $\langle \cdot,\cdot \rangle_{\cH_1} $ and $\langle \cdot,\cdot \rangle_{\cH_2} $.
Then the direct sum $H_1\oplus H_2$ is a Hilbert space with inner product
\begin{align}
\langle (w,x),(y,z) \rangle_{\cH_1\oplus \cH_2}  =\langle w,y \rangle_{\cH_1} + \langle x,z \rangle_{\cH_2}.
\end{align}
\end{proposition}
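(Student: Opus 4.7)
The plan is to verify the two things a Hilbert space requires: (i) that the proposed bilinear form is indeed an inner product, and (ii) that the resulting normed space is complete. Since the completeness question has essentially been settled already by Proposition~\ref{prop1:20aug12}, the main content is reducing each inner-product axiom on $\cH_1 \oplus \cH_2$ to the corresponding axiom in the factor spaces.

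First I would check that the form
\[
\langle (w,x),(y,z) \rangle_{\cH_1\oplus \cH_2} := \langle w,y\rangle_{\cH_1} + \langle x,z\rangle_{\cH_2}
\]
is well-defined (immediate, since $w,y \in \cH_1$ and $x,z \in \cH_2$) and then verify the axioms one by one. Linearity in the first argument for each fixed second argument, and conjugate symmetry, each follow by unpacking the definition and invoking the corresponding property of $\langle\cdot,\cdot\rangle_{\cH_1}$ and $\langle\cdot,\cdot\rangle_{\cH_2}$ separately on each component. For positive definiteness, note that
\[
\langle (w,x),(w,x)\rangle_{\cH_1\oplus \cH_2} = \|w\|_{\cH_1}^2 + \|x\|_{\cH_2}^2 \ge 0,
\]
with equality if and only if $\|w\|_{\cH_1}=0$ and $\|x\|_{\cH_2}=0$, which by positive definiteness of the factor inner products forces $w=0$ in $\cH_1$ and $x=0$ in $\cH_2$, hence $(w,x)=(0,0)$ in $\cH_1 \oplus \cH_2$.

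Next I would observe that the norm induced by this inner product is
\[
\bigl(\langle (w,x),(w,x)\rangle_{\cH_1\oplus \cH_2}\bigr)^{1/2} = \bigl(\|w\|_{\cH_1}^2 + \|x\|_{\cH_2}^2\bigr)^{1/2},
\]
which is exactly the norm appearing in Proposition~\ref{prop1:20aug12}. Since every Hilbert space is in particular a Banach space, $\cH_1$ and $\cH_2$ are Banach spaces in their own right, so Proposition~\ref{prop1:20aug12} applies and guarantees that $\cH_1\oplus \cH_2$ is complete in the induced norm. Combining this completeness with the inner-product verification above yields that $\cH_1\oplus \cH_2$ is a Hilbert space, completing the proof.

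There is no serious obstacle here; the only care needed is to treat the real and complex cases uniformly (i.e.\ to include conjugate symmetry rather than just symmetry), and to note explicitly that the induced norm coincides with the Banach-space norm of Proposition~\ref{prop1:20aug12} so that completeness can be imported rather than re-proved.
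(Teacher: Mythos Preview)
Your proposal is correct and follows essentially the same approach as the paper: verify that the proposed form is an inner product by reducing each axiom to the corresponding property of $\langle\cdot,\cdot\rangle_{\cH_1}$ and $\langle\cdot,\cdot\rangle_{\cH_2}$, then observe that the induced norm coincides with the direct-sum norm of Proposition~\ref{prop1:20aug12} so that completeness is inherited. The paper's proof is simply a terser version of what you have written.
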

\begin{proof}
That $\langle \cdot,\cdot \rangle_{\cH_1\oplus \cH_2}$ is an inner product follows from the fact
that $\langle \cdot,\cdot\rangle_{\cH_1} $ and $\langle \cdot,\cdot \rangle_{\cH_2}$ are inner products.
The expression 
$$\|(u,v),(u,v)\|_{\cH_1\oplus \cH_2} = \sqrt{\langle (u,v),(u,v) \rangle_{\cH_1\oplus \cH_2} },$$
is a norm on $\cH_1\oplus \cH_2$ that coincides with the norm in Proposition~\ref{prop1:20aug12}
in the event that the norms on $X_1$ and $X_2$ are induced by inner products.
\end{proof}
\noindent
See \cite{EZ86} for a more complete discussion about the direct sums of Banach spaces.

\subsection{Analytic Operators and the Implicit Function Theorem}

Here we briefly discuss analytic operators and the Implicit Function Theorem.
Our approach to proving that either negative Yamabe far-from-CMC solutions exist or
that positive Yamabe far-from-CMC solutions are non-unique relies on showing that the operator
in \eqref{eq1:12mar13} is analytic.  We then apply the Implicit Function Theorem to 
determine an analytic solution curve through a critical point where the linearization of \eqref{eq1:12mar13} has a nontrivial kernel.  To this end, the following discussion will
be essential going forward; the treatment is taken mostly from \cite{EZ86}.

Let $X$ and $Y$ be Banach spaces and assume that $M:X \times \cdots \times X \to Y$
is a $k$-linear bounded operator which is symmetric in all variables.  
We define a norm on $M$ by
\begin{align}\label{eq1:20may13}
\|M\| = \sup_{\|x_1\| = \cdots = \|x_n\| = 1} \|M(x_1,\cdots,x_n)\|,
\end{align}
which implies that 
$$
 \|M(x_1,\cdots,x_n)\| \le \|M\|\|x_1\|\|x_2\|\cdots \|x_n\| \quad \text{for all}~~ (x_1,\cdots, x_n).
$$

\begin{definition}\label{def2:25apr13}
A power operator can be created from $M$ by defining
\begin{align}
Mx^k &= M(x,\cdots,x),\\
Mx^my^n &= M(\underbrace{x,\cdots,x},\underbrace{y,\cdots,y}), \quad m+n = k  \nonumber,\\
& \quad \quad \quad m~\text{times} \quad n~\text{times} \nonumber
\end{align}
for any partition of $k$.  For $k=0$, $Mx^0$ will denote a fixed element in $X$.  
\end{definition}

Using this definition of power operator, we can then form operators of the form
\begin{align}\label{eq1:25apr13}
Tx = \sum_{n=0}^{\infty} T_n(x-x_0)^n,
\end{align}
where each $T_n$ is a power operator.  The operator $T$ converges
absolutely if the series
\begin{align}\label{eq2:25apr13}
\sum_{n=0}^{\infty} \|T_n\|\|x-x_0\|^n,
\end{align}
converges.

\begin{definition}\label{def1:25apr13}
Let $X$ and $Y$ be Banach spaces and let $T_n:X\to Y$ be power operators, $n \in \mathbb{N}$.  
\begin{itemize}
\item[(a)]  The operator $T: U \subset X \to Y$ is analytic at a point $x_0 \in X$ if and only if it is
defined on some neighborhood of $x_0$ and there is some number $r>0$ such that the series
\eqref{eq2:25apr13} converges for all $x$ with $\|x-x_0\| < r$.

\item[(b)] $T$ is analytic on the open set $U$ if and only if $T$ is analytic at every point of $U$.
\end{itemize}  
\end{definition}

A central theorem which we state without proof, and also taken in this particular form from \cite{EZ86}, is the Implicit Function Theorem.  

\begin{theorem}[Implicit Function Theorem]\label{thm1:25apr13}
Suppose that $X, Y$ and $Z$ are Banach spaces with $U \subset X\times Y$ a neighborhood of $(x_0,y_0)$.
Let $F:U\subset X\times Y \to Z$ be an operator satisfying $F(x_0,y_0) = 0$.  Then if
\begin{itemize}
\item[(i)] $D_y F$ exists on $U$ and $\text{ker}(D_yF(x_0,y_0))$ is trivial,

\item[(ii)] $F$ and $D_yF$ are continuous at $(x_0,y_0)$,

\end{itemize}
the following are true:

\begin{itemize}
\item[(a)]  There exist positive numbers $r_0$ and $r$ such that for every $x \in X$ satisfying
$\|x-x_0\| < r_0$, there is exactly one $y(x) \in Y$ for which $\|y(x) - y_0\| \le r$ and $F(x,y(x)) = 0$.

\item[(b)] If $F$ is a $C^m$-map, $1 \le m \le \infty$, on a neighborhood of $(x_0,y_0)$, then $y(x)$ is
also a $C^m$-map on a neighborhood $x_0$.

\item[(c)] If $F$ is analytic at $(x_0,y_0)$, then $y(x)$ is analytic at $x_0$.
\end{itemize}
\end{theorem}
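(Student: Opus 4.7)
The plan is to recast $F(x,y)=0$ as a fixed-point equation, apply the Banach contraction principle with parameters, and then bootstrap regularity of the resulting implicit function $y(x)$ from that of $F$. Interpreting hypothesis (i) in the standard way so that $A := D_y F(x_0, y_0)$ is a topological isomorphism of $Y$ onto $Z$, I would introduce
\[
G(x,y) := y - A^{-1} F(x,y),
\]
so that $F(x,y) = 0$ is equivalent to $G(x,y) = y$. The crucial observation is that $D_y G(x_0, y_0) = I - A^{-1} A = 0$, and by the continuity hypothesis (ii), $\|D_y G(x,y)\|$ remains below $\tfrac{1}{2}$ on a closed product neighborhood $\bar{B}(x_0, r_0) \times \bar{B}(y_0, r)$, after possibly shrinking $r_0, r > 0$. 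Using $G(x_0, y_0) = y_0$ and continuity of $G$, one can further arrange that $\|G(x, y_0) - y_0\| \le r/2$ for $\|x - x_0\| < r_0$. A mean-value estimate in the Banach setting then shows $G(x, \cdot)$ maps $\bar{B}(y_0, r)$ into itself and is a $\tfrac{1}{2}$-contraction, uniformly in $x$; the parametric Banach fixed-point theorem yields the unique $y(x)$, establishing (a).

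For (b), a Neumann-series argument based on continuity of $D_y F$ at $(x_0, y_0)$ and invertibility of $A$ ensures $D_y F(x, y(x))$ remains a topological isomorphism with uniformly bounded inverse on the neighborhood. Implicit differentiation of $F(x, y(x)) = 0$ then yields
\[
y'(x) = -\bigl[D_y F(x, y(x))\bigr]^{-1} D_x F(x, y(x)).
\]
Since $L \mapsto L^{-1}$ is smooth on the open set of topological isomorphisms and composition preserves $C^m$-regularity, an induction on $m$ propagates the regularity of $F$ to $y$.

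The main obstacle is (c). My plan is to show that the Picard iterates $y_{n+1}(x) = G(x, y_n(x))$, started from the constant $y_0(x) \equiv y_0$, are analytic in $x$ on some fixed neighborhood of $x_0$ and converge uniformly there in the norm of $Y$. Analyticity of each iterate follows from composition: $G$ is analytic in $(x,y)$ because $F$ is analytic and precomposition with the bounded linear map $A^{-1}$ preserves analyticity, and composition of analytic Banach-space operators is analytic. Uniform convergence on the neighborhood comes directly from the contraction estimate of part (a). The hard technical point is concluding from these two facts that the limit $y(x)$ is itself analytic: this requires a Banach-space analogue of the Weierstrass theorem on uniform limits of analytic functions, which is usually established by complexifying, producing holomorphic extensions of the iterates on a common polydisc, and applying Cauchy estimates to pass to the limit coefficient-by-coefficient in the power-series expansion about $x_0$. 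Securing a fixed polydisc of holomorphic extension valid for \emph{all} iterates, rather than a shrinking one, is the subtle step and is what ultimately makes the analytic hypothesis in (c) quantitatively stronger than the smooth hypothesis in (b).
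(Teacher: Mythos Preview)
The paper does not prove this theorem at all: it is explicitly ``state[d] without proof,'' with a citation to Zeidler~\cite{EZ86}. So there is no in-paper argument to compare your proposal against.

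That said, your outline is the standard route and is essentially what one finds in Zeidler. A few remarks. First, you were right to flag that hypothesis~(i) as written (trivial kernel of $D_yF(x_0,y_0)$) is not literally enough; one needs $A=D_yF(x_0,y_0)$ to be a bounded linear isomorphism onto $Z$, and the paper is implicitly using Zeidler's stronger hypothesis. Second, for part~(c) your Picard-iteration plus complexification plan works, and you correctly isolated the only genuinely delicate point: obtaining a fixed complex polydisc on which all iterates extend holomorphically. This is handled by observing that the contraction estimate $\|D_yG(x,y)\|\le 1/2$ persists on a complex neighborhood (the operator $G$ extends holomorphically because $F$ does, and the norm bound is open), so the same radii $r_0,r$ work uniformly for the complexified iteration. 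An alternative route for~(c), also in Zeidler, is a direct majorant-series argument that avoids complexification entirely; either is acceptable.
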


\subsection{Basic Bifurcation Theory}\label{bifur}
We now present some basic concepts from bifurcation theory that will be also essential in our analysis.
The following treatment is taken from \cite{HK04} and \cite{CH82};
see also~\cite{StHo2011a}.

Suppose that $F: U\times V \to Z$ is a mapping with open sets $U \subset X,V \subset \Lambda$,
where $X$ and $Z$ are Banach spaces and $\Lambda = \mathbb{R}$.  We let $x\in X$ and $\lambda \in \La$.
Additionally assume that $F(x,\lambda)$ is Fr\'{e}chet differentiable with respect to $x$ and $\lambda$ on $U\times V$.
We are interested in solutions to the nonlinear problem
\begin{align}\label{eq3:2july12}
F(x,\la) = 0.
\end{align}
A solution of \eqref{eq3:2july12} is a point $(x,\la) \in X\times \La$ such
that \eqref{eq3:2july12} is satisfied. 

\begin{definition}
Suppose that $(x_0,\la_0)$ is a solution to \eqref{eq3:2july12}.  We say that $\la_0$ is a {\bf bifurcation point} if 
for any neighborhood $U$ of $(x_0,\la_0)$ there exists a $\la\in \La$ and $x_1, x_2 \in X$, $x_1 \ne x_2$
such that $(x_1,\la), (x_2,\la) \in U$ and $(x_1,\la)$ and $(x_2,\la)$ are both solutions to \eqref{eq3:2july12}.
\end{definition}

Given a solution $(x_0,\la_0)$ to \eqref{eq3:2july12}, we are interested in analyzing solutions to \eqref{eq3:2july12}
in a neighborhood of $(x_0,\la_0)$ to determine whether or not it is a bifurcation point.  One of the most useful tools for this is the Implicit Function Theorem~\ref{thm1:25apr13}.  
This theorem asserts that if $D_xF(x_0,\la_0)$ is invertible, then there exists a 
neighborhood $U_1\times V_1 \subset U\times V$ and a continuous function $f: V_1\to U_1$ such that all solutions
to \eqref{eq3:2july12} in $U_1\times V_1$ are of the form $(f(\la),\la)$.  Therefore, in order for a bifurcation to occur at $(x_0,\la)$,
it follows that $D_xF(x_0,\la_0)$ must not be invertible.

\subsubsection{Liapunov-Schmidt Reduction}\label{LSchmidt}

The following discussion is taken from \cite{HK04}.  Let $X, \Lambda$ and $Z$ be Banach spaces and assume that $U\subset X$, $V\subset \La$.  
For $\la = \la_0$, we require that the mapping 
${F:U\times V \to Z}$ be a 
nonlinear Fredholm operator with respect to $x$; i.e.\ the linearization $D_xF(\cdot,\la_0)$ of $F(\cdot,\la_0):U \to Z$ 
is a Fredholm operator.  Assume that $F$ also satisfies the following assumptions:
\begin{align}\label{eq7:2july12}
&F(x_0,\la_0) = 0 \quad \text{for some $(x_0,\la_0) \in U \times V$},\\
&\text{dim ker}(D_xF(x_0,\la_0)) = \text{dim ker}(D_xF(x_0,\la_0)^*) = 1. \nonumber
\end{align}
Given that $D_xF(x_0,\lambda_0)$ has a one-dimensional kernel, 
there exists a projection operator $P:X \to X_1 = \text{ker}(D_xF(x_0,\la_0))$.  Similarly, one has the projection operator
${Q:Y\to Y_2 = \text{ker}(D_xF(x_0,\la_0)^*)}$.  This allows us to decompose $X = X_1 \oplus X_2$ and $Y = Y_1 \oplus Y_2$ where 
$Y_1 = R(D_XF(x_0,\lambda_0))$.  We will refer to the decomposition $X_1\oplus X_2$ and $Y_1\oplus Y_2$ induced by
$D_xF(x_0,\la_0)$ as the {\bf Liapunov decomposition}, and we see that  
$F(x,\lambda) = 0$ if and only if the following two equations are satisfied
\begin{align}\label{eq1:27june12}
&QF(x,\lambda) = 0,\\
&(I-Q)F(x,\lambda) = 0. \nonumber
\end{align}

For any $x\in X$, we can write $x = v+w$, where $v= Px$ and $w = (I-P)x$. 
Define $G: U_1\times W_1 \times V_1 \to Y_1$ by
\begin{align}\label{eq5:2july12}
&G(v,w,\la) = (I-Q)F(v+w,\la), \quad \text{where}\\
&U_1\subset X_1,\hspace{3mm} W_1 \subset X_2, \hspace{3mm}   V_1 \subset \mathbb{R} \quad \text{and} \nonumber \\
&v_0 = Px_0 \in U_1, \quad w_0 = (I-P)x_0 \in W_1, \nonumber
\end{align}
and $U_1, W_1$ are neighborhoods such that $U_1 + W_1 \subset U \subset X$.

Then the definition of $G(v,w,\la)$ implies that $G(v_0,w_0,\la_0) = 0$ and our choice of function spaces ensures that
$$D_wG(v_0,w_0,\la_0) =(I- Q)D_xF(x_0,\la_0):X_2 \to Y_1,$$
is bijective.  The Implicit Function Theorem~\ref{thm1:25apr13} then implies that there exist
neighborhoods $U_2\subset U_1, W_2\subset W_1$ and $V_2\subset V_1$ and
a continuous function  
\begin{align}\label{eq9:6july12}
&\psi:U_2\times V_2 \to W_2 \quad \text{such that all solutions to $G(v,w,\la) = 0$}, \\
&\text{ in $U_2\times W_2\times V_2$ $\quad$ are of the form $G(v,\psi(v,\la), \la ) = 0.$}\nonumber 
\end{align}
Insertion of $\psi(v,\la)$ into the second equation in \eqref{eq1:27june12} yields
a finite-dimensional problem
\begin{align}\label{eq6:2july12}
\Phi(v,\la) = QF(v+\psi(v,\la),\la) = 0.
\end{align}
We observe that finding solutions $(v,\la)$ to \eqref{eq6:2july12} is equivalent to finding
solutions to $F(x,\la) = 0$ in a neighborhood of $(x_0,\la_0)$.  We will refer to the finite-dimensional problem \eqref{eq6:2july12}
as the {\bf Liapunov-Schmidt reduction} of \eqref{eq3:2july12}. 

With additional assumptions on the operator $F(x,\la)$ and another application of the Implicit Function Theorem, we may 
conclude that all solutions to \eqref{eq6:2july12} are of the form
\begin{align}\label{eq2:26apr13}
(v,\gamma(v)), \quad \gamma: U_3\subset U_2 \to I \subset \mathbb{R}.
\end{align}
Therefore, all solutions to \eqref{eq6:2july12}
in a neighborhood of $v_0$ must satisfy 
\begin{align}\label{eq1:26apr13}
g(v) = QF(v+\psi(v,\gamma(v)),\gamma(v)) = 0.
\end{align} 
Given that $\text{ker}(D_xF(x_0,\lambda_0))$ is spanned by $\hat{v}_0$, then we can write $v = s\hat{v}_0 +v_0$.
Substituting this into \eqref{eq1:26apr13} we obtain
\begin{align}\label{eq2:27jun12}
g(s) = QF(s\hat{v}_0+v_0+\psi(s\hat{v}_0+v_0,\gamma(v_0+s\hat{v}_0),\gamma(v_0+s\hat{v}_0) = 0.
\end{align}  
This reduction provides the basis of the following theorem taken from \cite{HK04}, which allows us to determine
a unique solution curve through the point $(x_0,\la_0)$. 

\begin{theorem}\label{thm1:29apr12}
Assume $F:U\times V \to Z$ is continuously differentiable on ${U\times V \subset X \times \mathbb{R}}$ and
that assumptions \eqref{eq7:2july12} hold. 
Additionally, assume that 
\begin{align}\label{eq3:29apr12}
D_{\lambda}F(x_0,\lambda_0) \notin R(D_xF(x_0,\lambda_0)).
\end{align}
Then there is a continuously differentiable curve through $(x_0,\lambda_0)$.
That is, there exists
\begin{align}\label{eq4:29apr12}
\{(x(s),\lambda(s))~|~s \in (-\delta, \delta),~ (x(0),\lambda(0)) = (x_0,\lambda_0)\},
\end{align}
such that
\begin{align}\label{eq5:29apr12}
F(x(s), \lambda(s)) = 0 \quad \text{for $s\in (-\delta, \delta)$},
\end{align}
and all solutions of $F(x,\lambda) = 0$ in a neighborhood of $(x_0, \lambda_0)$ belong to the curve
\eqref{eq4:29apr12}.
\end{theorem}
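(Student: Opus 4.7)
The plan is to apply the Liapunov--Schmidt reduction developed in Section~\ref{LSchmidt} and then close the argument with one final application of the Implicit Function Theorem~\ref{thm1:25apr13}. Because $D_xF(x_0,\la_0)$ is Fredholm with one-dimensional kernel spanned by some $\hat v_0$ and one-dimensional cokernel, I have the Liapunov decompositions $X=X_1\oplus X_2$ and $Z=Y_1\oplus Y_2$ with $X_1=\ker D_xF(x_0,\la_0)$, $Y_1=R(D_xF(x_0,\la_0))$ and $Y_2=\ker D_xF(x_0,\la_0)^*$, together with the associated projectors $P$ and $Q$. Then solving $F(x,\la)=0$ near $(x_0,\la_0)$ is equivalent to the pair $(I-Q)F(x,\la)=0$ and $QF(x,\la)=0$.

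The first stage is to resolve the range equation for $w=(I-P)x$ in terms of $v=Px$ and $\la$. Since $D_wG(v_0,w_0,\la_0)=(I-Q)D_xF(x_0,\la_0):X_2\to Y_1$ is a bijection by construction, Theorem~\ref{thm1:25apr13} yields a continuously differentiable map $\psi$ with $w=\psi(v,\la)$ and $\psi(v_0,\la_0)=w_0$ satisfying \eqref{eq9:6july12}. Substituting this into the second equation reduces the problem to the finite-dimensional bifurcation equation
\[
\Phi(v,\la)=QF(v+\psi(v,\la),\la)=0,
\]
which, after writing $v=v_0+s\hat v_0$, becomes a single scalar equation in the two real variables $(s,\la)$ with $\Phi(v_0,\la_0)=0$.

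The decisive step is to apply the Implicit Function Theorem a second time to solve $\Phi(v_0+s\hat v_0,\la)=0$ for $\la=\la(s)$ near $s=0$. Differentiating in $\la$ and using the chain rule,
\[
\partial_\la\Phi(v_0,\la_0)=Q\,D_xF(x_0,\la_0)\,\partial_\la\psi(v_0,\la_0)+Q\,D_\la F(x_0,\la_0).
\]
The first term vanishes because the range of $D_xF(x_0,\la_0)$ is exactly $Y_1=\ker Q$, so $\partial_\la\Phi(v_0,\la_0)=Q\,D_\la F(x_0,\la_0)$. The transversality hypothesis \eqref{eq3:29apr12} asserts precisely that $D_\la F(x_0,\la_0)\notin Y_1$, whence $Q\,D_\la F(x_0,\la_0)\neq 0$ in the one-dimensional space $Y_2$. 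This is the invertibility needed to invoke Theorem~\ref{thm1:25apr13} once more, producing a continuously differentiable function $\la=\la(s)$ with $\la(0)=\la_0$, defined on some interval $(-\delta,\delta)$.

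Setting $x(s)=v_0+s\hat v_0+\psi(v_0+s\hat v_0,\la(s))$ then yields the curve \eqref{eq4:29apr12}, and since $\psi$ captures every local solution of $(I-Q)F=0$ while $\la(s)$ captures every local solution of the reduced scalar equation, this curve exhausts all zeros of $F$ in a neighborhood of $(x_0,\la_0)$. The main obstacle is the transversality computation identifying $\partial_\la\Phi(v_0,\la_0)$ with $Q\,D_\la F(x_0,\la_0)$ and verifying that it is nonzero; once that is in hand the remainder is a routine two-step use of the Implicit Function Theorem.
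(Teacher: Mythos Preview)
Your proposal is correct and follows exactly the approach the paper sets up in Section~\ref{LSchmidt} and then defers to \cite{CMH12,HK04} for; in particular, your key computation $\partial_\la\Phi(v_0,\la_0)=Q\,D_\la F(x_0,\la_0)\ne 0$ is precisely the ``additional assumption'' the paper alludes to above \eqref{eq2:26apr13} when it introduces the map $\gamma$, and your curve $x(s)=v_0+s\hat v_0+\psi(v_0+s\hat v_0,\la(s))$ matches \eqref{eq2:27jun12}. The paper itself gives no argument beyond the citation, so your write-up is in fact more complete than what appears there.
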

\begin{proof}
See \cite{CMH12} or \cite{HK04}.
\end{proof}

In order to demonstrate that a nonlinear operator $F(x,\la)$ exhibits a bifurcation point and has non-unique
solutions to $F(x,\la) = 0$, one constructs the solution curve in Theorem~\ref{thm1:29apr12} through a
point $(x_0,\la_0)$ where $D_xF(x_0,\la_0)$ has a nontrivial, one-dimensional kernel.  One then analyzes
the coefficients in the Taylor expansion of this solution curve at the critical points $(x_0,\la_0)$ 
using additional results from bifurcation theory to determine if it has a ``fold".  
We will not employ this approach in our paper, as the operator $F((\phi,\bw),\la)$ in \eqref{eq1:26oct12}
is not amenable such techniques.  
(However, see our related work in~\cite{CMH12}.)

Instead, we rely on additional regularity of our
solution curve in \eqref{eq4:29apr12}.  In particular, we demonstrate that
our solution curve is analytic in a neighborhood of $0$.  The far-from-CMC existence results \eqref{farCMC} combined with
the analyticity of our curve will allow us to conclude that $\la(s)$ cannot vanish identically in a neighborhood of
zero.  This is the crux of our argument.  To demonstrate the analyticity of our solution curve, we must show that
the one-parameter family $g_{\la}$ defined above \eqref{eq1:26oct12} is analytic in $\la$ in a neighborhood of zero.
This will allow us to conclude that the operator $F((\phi,\bw),\la)$ in \eqref{eq1:26oct12} is analytic in a neighborhood of the critical point $((1,{\bf 0}),0)$,
and therefore that our solution curve is analytic by the Implicit Function Theorem.  
We first prove the existence of the analytic, one-parameter family $g_{\la}$ for closed, $3$-dimensional manifolds $\cM$ that admit a metric
with positive scalar curvature.

\subsection{Properties of the Scalar Curvature Operator}\label{curvature}

The scalar curvature operator
$$
R: \cA^{s,p} \to W^{s-2,p},
$$
takes the form
\begin{align}\label{eq1:11mar13}
\left.R(g)\right|_{U_{i_j}}= &-\frac12g^{ij}g^{ab}\frac{\partial^2g_{ij}}{\partial x^a\partial x^b}+\frac12g^{ij}g^{ab}\frac{\partial^2g_{ai}}{\partial x^b\partial x^j} +\frac12g^{ij}g^{ab}\frac{\partial^2g_{aj}}{\partial x^b\partial x^i}\\
 &-\frac12g^{ij}g^{ab}\frac{\partial^2g_{ab}}{\partial x^i\partial x^j} -g^{ij}g_{ab}g^{kl}\Gamma_{ij}^{~~a}\Gamma_{kl}^{~~b}+g^{ij}g^{ab}g_{kl}\Gamma_{ai}^{~~k}\Gamma_{bj}^{~~l}, \nonumber
\end{align}
where $U_j$ is a given coordinate chart and $g \in \cA^{s,p}$.  The main objective of this section is to show that for a given manifold $\cM$ which admits a metric of positive scalar curvature, 
that there exists an analytic one-parameter family of metrics $(g_{\la})$ on $\cM$ that satisfies $R(g_{\la}) = \la$ for $\la \in (-\delta, \delta)$.  This 
family of metrics is necessary for the construction of the one-parameter family of non-linear problems in \eqref{eq1:26oct12}.
 
Using the definition of $R(g)$, we have the first preliminary result.

\begin{theorem}\label{thm1:26oct12}
The scalar curvature operator $R:\cA^{s,p} \to W^{s-2,p}$ is an analytic operator.
\end{theorem}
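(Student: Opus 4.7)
The plan is to prove analyticity by decomposing the scalar curvature operator into a finite composition of elementary analytic maps, each of which can be handled by well-known arguments in Banach algebras and Sobolev multiplication theory. Throughout I will use the fact that when $s > 3/p$, the Sobolev space $W^{s,p}(\cM)$ is a Banach algebra under pointwise multiplication, so that multiplication $(u,v) \mapsto uv$ is a bounded bilinear map $W^{s,p} \times W^{s,p} \to W^{s,p}$. Since bounded multilinear maps are analytic (their Taylor series terminates), and analyticity is preserved under composition, sums, and partial differentiation $\partial_i : W^{s,p} \to W^{s-1,p}$, it suffices to identify the building blocks.

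First, I would establish analyticity of the inversion map $g \mapsto g^{-1}$ from $\cA^{s,p}$ into the $W^{s,p}$ symmetric $2$-contravariant tensors. On any coordinate chart $(U_j, x^a)$, the components $g^{ij}$ are given by $\det(g)^{-1}$ times polynomial expressions in $g_{ab}$. The determinant map $g \mapsto \det(g)$ is polynomial and hence analytic, and for $g \in \cA^{s,p}$ we have $\det(g) > 0$ pointwise, so $\det(g)$ is bounded away from zero. The reciprocal $f \mapsto 1/f$ is analytic at any $f_0 \in W^{s,p}$ bounded away from zero: expanding
\begin{equation}
\frac{1}{f_0 + h} = \frac{1}{f_0}\sum_{n=0}^{\infty}\left(-\frac{h}{f_0}\right)^n,
\end{equation}
and using the Banach algebra property shows convergence for $\|h\|_{s,p}$ small, so the power series representation shows $f \mapsto 1/f$ is analytic. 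Combining with polynomial multiplication in the components gives analyticity of $g \mapsto g^{-1}$.

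Next, I would examine the explicit formula \eqref{eq1:11mar13}. In coordinates, $R(g)$ is a finite sum of terms of two types: (i) contractions of $g^{ij}g^{ab}$ against second partial derivatives $\partial_a\partial_b g_{ij}$, and (ii) quartic contractions involving $g$, $g^{-1}$, and Christoffel symbols $\Gamma_{ij}^{~~a}$, which are themselves analytic in $g$ since they are products of entries of $g^{-1}$ and first partial derivatives of $g$. Each $\partial_a g_{ij}$ lies in $W^{s-1,p}$ and each $\partial_a\partial_b g_{ij}$ lies in $W^{s-2,p}$, and since $s > 3/p$ one has continuous multiplications $W^{s,p} \cdot W^{s-1,p} \hookrightarrow W^{s-1,p}$ and $W^{s,p} \cdot W^{s,p} \cdot W^{s-2,p} \hookrightarrow W^{s-2,p}$ (via standard Sobolev multiplication theorems). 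Thus each local summand of $R(g)$ is expressible as a bounded polynomial in finitely many analytic inputs, and is therefore analytic as a map $\cA^{s,p} \to W^{s-2,p}$.

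Finally, I would patch the local expressions together globally. Fix a finite atlas $\{U_j\}$ of the compact manifold $\cM$ together with a subordinate partition of unity $\{\chi_j\}$, noting that multiplication by a smooth compactly supported function is a bounded linear operator on each $W^{k,p}(U_j)$. The global norm on $W^{s-2,p}(\cM)$ is equivalent to the sum of local chart norms, so $R$ is analytic globally if and only if each localization $\chi_j R(g)|_{U_j}$ is analytic in $g$, which follows from the previous paragraph. The main obstacle is bookkeeping: verifying that the Sobolev product estimates genuinely deliver values in $W^{s-2,p}$ at the right regularity and uniformly over small perturbations, so that the series defining the composite is absolutely convergent in the sense of Definition~\ref{def1:25apr13}. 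Once that is checked for each elementary building block, the analyticity of $R$ follows from the closure of analytic maps under composition and summation.
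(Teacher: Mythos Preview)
Your proof is correct and follows the standard ``composition of analytic building blocks'' strategy: you isolate the inverse-metric map $g\mapsto g^{-1}$ and prove its analyticity via a geometric series in the Banach algebra $W^{s,p}$, then observe that all remaining ingredients (coordinate partials, Sobolev multiplication, contraction) are bounded multilinear and hence polynomial, so the local expression \eqref{eq1:11mar13} is a finite composition of analytic maps and the result follows by closure under composition.

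The paper takes a different and shorter route. Rather than singling out $g\mapsto g^{-1}$, it works directly with the Taylor remainder: it cites smoothness of $R$ from Fischer--Marsden, bounds the remainder by $\frac{1}{n!}\sup_{0<\tau<1}\|D^{(n)}R(g_0+\tau h)h^n\|$, passes to charts via a partition of unity, and then asserts that in local coordinates the Fr\'echet derivatives $D^k R$ vanish identically for $k\ge 8$, because the coordinate expression \eqref{eq1:11mar13} is a degree-$7$ polynomial in the entries $g_{ab}$, $g^{ab}$, $\partial g$, $\partial^2 g$. That last step is the whole argument, and it is terse: the expression is polynomial only if one treats $g^{ab}$ as an independent variable, whereas as a function of $g_{ab}$ alone the inverse is not polynomial. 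In effect the paper is implicitly using exactly the analyticity of $g\mapsto g^{-1}$ that you proved explicitly. Your approach is therefore more self-contained and makes the role of the inversion map transparent; the paper's approach is quicker but leaves that point to the reader.
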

\begin{proof}
We first note that the scalar curvature operator is a smooth operator \cite{FiMa75}.
Fix a metric $g_0 \in \cA^{s,p}$.  Then for any $w \in \cA^{s,p}$, let $h = w- g_0$.  Then by Theorem~\ref{thm1:29apr13}, the remainder term $R^n$ for
the $n$-th order Taylor series about $g_0$ has the form
\begin{align}
\|R^{n}(w)\|_{W^{s,p}} \le \frac{1}{(n)!}\sup_{0<\tau<1}\|D^{(n)}R(g_0+\tau h)(h)^{n}\|_{W^{s,p}(\cM)},
\end{align}
where $D^n$ is the $n$-th Frechet derivative of $R$ and $h^n = (h, \cdots,h)$ is an 
element of $(\cA^{s,p})^n$.  See \cite{EZ86} for more details.  If $(\rho_i,U_i)$ is a coordinate chart of $\cM$, let $(\chi_j)_{j=1}^N$ denote
a smooth partition of unity subordinate to the $U_i$.  Then we have that
\begin{align}
\|D^{(n)}R(g_0+ & \tau h)(h)^{n}\|_{W^{s,p}(\cM)}
\label{eqn:dnr}
\\
& 
  \le \sum_{j=1}^N \|\chi_jD^{(n)}R(g_0+\tau h)(h)^{n}\|_{W^{s,p}(U_{i_j})},
\end{align}
where $\text{supp}(\chi_j) \subset U_{i_j}$.  In each chart $U_{i_j}$, we have that
\begin{align}
\left.R(g)\right|_{U_{i_j}}= &-\frac12g^{ij}g^{ab}\frac{\partial^2g_{ij}}{\partial x^a\partial x^b}+\frac12g^{ij}g^{ab}\frac{\partial^2g_{ai}}{\partial x^b\partial x^j} +\frac12g^{ij}g^{ab}\frac{\partial^2g_{aj}}{\partial x^b\partial x^i}\\
 &-\frac12g^{ij}g^{ab}\frac{\partial^2g_{ab}}{\partial x^i\partial x^j} -g^{ij}g_{ab}g^{kl}\Gamma_{ij}^{~~a}\Gamma_{kl}^{~~b}+g^{ij}g^{ab}g_{kl}\Gamma_{ai}^{~~k}\Gamma_{bj}^{~~l}.
\end{align}
In local coordinates, 
$$
\left.R(g_{ij})\right|_{U_{i_j}} :W^{s,p}(T^0_2(U_{i_j})) \to W^{s-2,p}(U_{i_j}),
$$
and 
$$
D^{k}(\left.R(g_{ij})\right|_{U_{i_j}}) \equiv 0
$$
for $k\ge 8$.  
This together with \eqref{eqn:dnr} implies the result.
\end{proof}

We will also have need for the following theorem from \cite{FiMa75}, which allows us to decompose
$S^{s,p}_2$ using the linearization of $R$ at a non-flat metric $g_0 \in \cA^{s,p}$.  Recall that on a
$3$-dimensional manifold $\cM$, non-flat (non-vanishing curvature tensor) is synonymous with
a non-vanishing Ricci tensor. 

\begin{theorem}\label{thm2:26oct12}
Let $g_0$ be a non-flat metric in $\cA^{s,p}$ such that $R(g_0) = 0$.  Then the linearization
$D_gR(g_0)$ is surjective and $\cS_2^{s,p} = \text{ker}(D_gR(g_0))\oplus \text{R}((D_gR(g_0))^*)$, where $(D_gR(g_0))^*$ is the
adjoint of $D_gR(g_0)$.  Moreover,
$R:\cA^{s,p} \to W^{s-2,p}$ maps any neighborhood of $g_0$ onto a neighborhood of $0$. 
\end{theorem}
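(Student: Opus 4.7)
The plan is to follow the classical Fischer--Marsden strategy: compute the linearization and its formal adjoint explicitly, establish injectivity of the adjoint under the hypotheses on $g_0$, then invoke elliptic/Fredholm theory for the splitting and a surjective version of the implicit function theorem for the openness assertion.

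First I would compute the linearization. A standard calculation from the local expression for $R(g)$ given in \eqref{eq1:11mar13} yields, for $h \in \cS^{s,p}_2$,
\begin{equation*}
D_gR(g_0)\,h \;=\; -\Delta_{g_0}(\mathrm{tr}_{g_0} h) \;+\; \mathrm{div}_{g_0}\mathrm{div}_{g_0} h \;-\; \langle \mathrm{Ric}(g_0),\, h\rangle_{g_0}.
\end{equation*}
Integrating by parts against a test function $f \in W^{s,p}$ identifies the formal adjoint as
\begin{equation*}
(D_gR(g_0))^{*} f \;=\; -(\Delta_{g_0} f)\, g_0 \;+\; \mathrm{Hess}_{g_0} f \;-\; f\,\mathrm{Ric}(g_0),
\end{equation*}
viewed as a second order linear differential operator $W^{s,p} \to \cS^{s-2,p}_2$.

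Next, I would show that $(D_gR(g_0))^{*}$ is injective on the closed manifold $\cM$. Setting $(D_gR(g_0))^{*} f = 0$ and tracing with $g_0$ gives, on a $3$-manifold, $-3\Delta_{g_0} f + \Delta_{g_0} f - f\,R(g_0) = 0$. Since $R(g_0)=0$ by hypothesis, this reduces to $\Delta_{g_0} f = 0$, which on a closed manifold forces $f$ to be constant. Substituting back into the adjoint equation yields $f\,\mathrm{Ric}(g_0) = 0$; because $g_0$ is non-flat (equivalently, $\mathrm{Ric}(g_0)\not\equiv 0$ in dimension three), the constant $f$ must vanish. Hence the adjoint has trivial kernel.

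With injectivity of the adjoint in hand, the surjectivity of $D_gR(g_0): \cS^{s,p}_2 \to W^{s-2,p}$ follows from standard elliptic/Fredholm theory applied to the overdetermined elliptic operator $(D_gR(g_0))^{*}$ (the associated symbol is injective, so one has a Fredholm pair with closed range and the range of $D_gR(g_0)$ equals the annihilator of $\ker(D_gR(g_0))^{*}$, which is all of $W^{s-2,p}$). The same Fredholm framework yields the $L^2$-orthogonal splitting
\begin{equation*}
\cS^{s,p}_2 \;=\; \ker(D_gR(g_0)) \,\oplus\, \mathrm{R}\bigl((D_gR(g_0))^{*}\bigr),
\end{equation*}
since the range of the adjoint is closed and complementary to the kernel of $D_gR(g_0)$. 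Finally, for the concluding openness claim, the surjectivity of $D_gR(g_0)$ together with the smoothness of $R$ (Theorem~\ref{thm1:26oct12}) and the splitting above lets one apply a surjective form of the implicit function theorem (the local submersion/Lyusternik theorem): on a complement of $\ker(D_gR(g_0))$, the restriction of $R$ is a local diffeomorphism near $g_0$, and therefore $R$ carries every neighborhood of $g_0$ in $\cA^{s,p}$ onto a neighborhood of $0$ in $W^{s-2,p}$.

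The main obstacle I anticipate is technical rather than conceptual: carefully verifying the Fredholm/closed range properties of $(D_gR(g_0))^{*}$ in the $W^{s,p}$ category (rather than the $L^2$ setting of Fischer--Marsden), together with justifying the $L^2$ orthogonal decomposition within the Sobolev scale via a duality/regularity argument. Once the injectivity of the adjoint is established, everything else is a direct appeal to standard functional-analytic machinery.
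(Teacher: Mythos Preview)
Your proposal is correct and is precisely the Fischer--Marsden argument; the paper does not supply its own proof, instead deferring entirely to Theorem~1 of \cite{FiMa75}, which establishes exactly the computation of $D_gR(g_0)$ and $(D_gR(g_0))^*$, the injectivity of the adjoint under $R(g_0)=0$ and non-flatness, and the resulting surjectivity/splitting/openness via elliptic Fredholm theory and the local surjectivity theorem. Your anticipated technical obstacle (justifying the decomposition in the $W^{s,p}$ scale rather than $L^2$) is also handled in \cite{FiMa75} via the Berger--Ebin style elliptic splitting arguments, so there is nothing further to add.
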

\begin{proof}
See Theorem {\bf 1} in \cite{FiMa75}.
\end{proof}

\noindent
We now recall that if a $3$-dimensional compact manifold $\cM$ admits  a metric with positive scalar curvature, then
any $f \in C^{\infty}$ is the scalar curvature of some Riemannian metric $g$ on $\cM$ \cite{KW75,Au82}.  Therefore, for a given 
$\la \in \mathbb{R}$, the set of metrics $g$ on $\cM$ that satisfy $R(g) = \la$ will be non-empty.
Using this fact, Theorems~\ref{thm1:26oct12} and \ref{thm2:26oct12} and the Implicit Function Theorem~\ref{thm1:25apr13}, we can now prove the following theorem, which
allows us to conclude the existence of an analytic, one-parameter family of metrics $g_{\la}$ that satisfies $R(g_{\la})  = \la$.  

\begin{theorem}\label{thm3:26oct12}
Suppose that $\cM$ is a closed $3$-dimensional manifold that admits a metric with positive scalar curvature.
Then for $\la$ in a neighborhood of  $~0$, there exists an analytic one-parameter family of metrics $(g_{\la})$ through $g_0$ such that
$R(g_{\la}) = \la$.  
\end{theorem}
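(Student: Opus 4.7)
My plan is to reduce the construction of the family $(g_\la)$ to an application of the analytic Implicit Function Theorem (Theorem~\ref{thm1:25apr13}) centered at a carefully chosen scalar-flat metric $g_0$. First I will invoke the Kazdan--Warner result \cite{KW75}: since $\cM$ is a closed $3$-manifold admitting a metric of positive scalar curvature, every smooth function on $\cM$ is realized as the scalar curvature of some Riemannian metric, and in particular there exists $g_0 \in \cA^{s,p}$ with $R(g_0) \equiv 0$. I will then argue that such a $g_0$ is automatically non-flat: if its Riemann tensor vanished identically, $(\cM,g_0)$ would be a closed flat $3$-manifold, hence finitely covered by the $3$-torus $T^3$, and lifting any metric of positive scalar curvature from $\cM$ to this cover would produce a metric of positive scalar curvature on $T^3$, contradicting the well-known obstruction. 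Thus $g_0$ satisfies the hypothesis of Theorem~\ref{thm2:26oct12}.

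Applying Theorem~\ref{thm2:26oct12} then yields the direct-sum decomposition $\cS_2^{s,p} = \text{ker}(D_g R(g_0)) \oplus \text{R}((D_g R(g_0))^*)$ together with surjectivity of the linearization $D_g R(g_0) : \cS_2^{s,p} \to W^{s-2,p}$. Writing $Y := \text{R}((D_g R(g_0))^*)$, the restricted map $D_g R(g_0)|_Y : Y \to W^{s-2,p}$ is a continuous linear bijection between Banach spaces, hence a topological isomorphism by the Open Mapping Theorem. Because $\cA^{s,p}$ is open in $\cS_2^{s,p}$, there is a neighborhood $V \subset Y$ of the origin with $g_0 + h \in \cA^{s,p}$ for every $h \in V$, and I will define
\begin{equation*}
F : V \times \mathbb{R} \to W^{s-2,p}, \qquad F(h,\la) := R(g_0 + h) - \la.
\end{equation*}
Then $F(0,0) = 0$, the operator $F$ is analytic by Theorem~\ref{thm1:26oct12} (composition of the analytic scalar-curvature operator with an affine map, minus the linear term in $\la$), and $D_h F(0,0) = D_g R(g_0)|_Y$ is a Banach-space isomorphism.

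Finally I will apply the analytic form of the Implicit Function Theorem (Theorem~\ref{thm1:25apr13}(c)), solving for $h$ as a function of $\la$, to obtain $\delta > 0$ and an analytic map $\la \mapsto h(\la) \in Y$ on $(-\delta,\delta)$ with $h(0) = 0$ and $R(g_0 + h(\la)) = \la$; setting $g_\la := g_0 + h(\la)$ will give the required analytic one-parameter family through $g_0$. The step I expect to be the main obstacle is verifying the non-flatness of the scalar-flat metric $g_0$ produced by Kazdan--Warner, since without it Theorem~\ref{thm2:26oct12} does not apply and one loses both the closed-range splitting of $\cS_2^{s,p}$ and the surjectivity of $D_g R(g_0)$ that make the IFT step work.
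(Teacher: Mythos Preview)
Your proposal is correct and follows the same overall architecture as the paper: produce a scalar-flat metric $g_0$, invoke Theorem~\ref{thm2:26oct12} to split $\cS_2^{s,p}$ and obtain surjectivity of $D_gR(g_0)$, restrict to the complement $Y=\mathrm{R}((D_gR(g_0))^*)$ of the kernel, and apply the analytic Implicit Function Theorem to $F(h,\la)=R(g_0+h)-\la$ on $Y\times\mathbb{R}$.

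The one genuine difference is in how you secure non-flatness of $g_0$. The paper builds $g_0$ explicitly as a convex combination $t_0h_0+(1-t_0)h_1$ of metrics with negative and positive scalar curvature, asserts that ``in general'' the Ricci tensor of $g_0$ is nonzero, and then sketches a perturbation of $h_1$ off the kernel of the linearized Ricci operator to handle the exceptional case. Your argument is cleaner and avoids the perturbation step entirely: any flat closed $3$-manifold is finitely covered by $T^3$ (Bieberbach), and pulling back the assumed positive-scalar-curvature metric on $\cM$ to that cover would contradict the Schoen--Yau/Gromov--Lawson obstruction; hence every scalar-flat metric on $\cM$ is automatically non-flat. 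This buys you a shorter and more robust justification of the key hypothesis of Theorem~\ref{thm2:26oct12}, at the cost of importing a deeper topological input than the paper uses.
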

\begin{proof}
Because $\cM$ admits a metric with positive scalar curvature, it admits a non-flat metric $g_0$ with
zero scalar curvature.  Indeed, for some fixed $t_0 \in (0,1)$, one obtains the metric $g_0 = t_0h_0 + (1-t_0)h_1$ by taking a convex combination
of a metric $h_0$ with negative scalar curvature and a metric $h_1$ with positive scalar curvature.  In general, the
Ricci tensor of $g_0$ will be nonzero.  If it is zero, by fixing $h_0$ and perturbing $h_1$ to obtain $h_2 = h_1+\la h_3$,
where $h_3$ is non-flat metric that does not lie in the kernel of the linearized Ricci operator, one obtains the metric
$g_1 = t_1h_0+(1-t_1)h_2$ which has zero scalar curvature for some $t_1 \in (0,1)$ and will have a nontrivial
Ricci tensor for $ \la $ sufficiently small.  See \cite{Au82, KW75} for more details. 

Because $g_0$ is non-flat, Theorem~\ref{thm2:26oct12} implies $\cS^{s,p}_2 = \text{ker}(D_gR(g_0)) \oplus \text{R}((D_gR(g_0))^*)$.
Let $X = \text{R}((D_gR(g_0))^*) $ and define the operator
\begin{align}
&G: X\times \mathbb{R} \to W^{s-2,p},\\
&G(h,\la) = R(g_0+h ) - \la.\nonumber
\end{align}

Theorem~\ref{thm2:26oct12} and the splitting results in \cite{FiMa75} imply that for $h \in X$, $g_0+h$ determines an open subset of $\cS^{s,p}_2$.
Moreover, for $h$ sufficiently small, $g_0 +h \in \cA^{s,p}$ given that $\cA^{s,p}$ is an open subset of $\cS^{s,p}_2$.  Therefore,
there exists an open subset $U_1 \subset \cS^{s,p}_2$ about $g_0$ for which the scalar curvature operator is well-defined.
So for all $h \in X$ such that $g_0+h \in U_1$, $G(h,\la)$ is well-defined.  

By construction, $D_hG(0,0)$ is invertible and we may apply the Implicit Function Theorem in a neighborhood of $g_0$.  We conclude that there exists
a neighborhood $U_2\times V \subset U_1\times V \subset X \times \mathbb{R}$ of $(0,0)$ and a function $\psi: V \to U_2$, $\psi(0) = 0$, such that $G(h,\la) = 0$ in this neighborhood
if and only if $h = \psi(\la)$.  Letting $g_{\la} = g_0+\psi(\la) \in \cA^{s,p}$, we observe that $R(g_{\la}) - \la = G(\psi(\la),\la) = 0$, which implies that
$R(g_{\la}) = \la$ and $R(g_0) = 0$.  By Theorem~\ref{thm1:26oct12} and the Implicit Function Theorem~\ref{thm1:25apr13} the curve $g_{\la}$ is analytic in the variable $\la$. 
\end{proof}

\begin{remark}\label{rem1:26feb13}
The fact that $\psi(\la)$ is analytic in a neighborhood of $0$ means that for $\la$ sufficiently small,
\begin{align}
\lim_{N\to \infty}\|\psi(\la)-\sum_{i=0}^N\frac{1}{i!}D_{\la}^i\psi(0)\la^i\|_{W^{s,p}(T^0_2\cM)} = 0.  
\end{align}
Moreover, the sum 
\begin{align}\label{eq3:20may13}
\sum_{i=0}^{\infty} \frac{1}{i!}\|D_{\la}^i\psi(0)\||\la|^i
\end{align}
converges for $\la$ sufficiently small by Definition~\ref{def1:25apr13}, where $\|D_{\la}^i\psi(0)\|$ is the
operator norm \eqref{eq1:20may13} induced by the norm on $\mathbb{R}$ and the norm $\|\cdot\|_{W^{s,p}(T^0_2\cM)}$.
Therefore, if $1 \le k< s-\frac{3}{p}$, 
\begin{align}
\lim_{N\to \infty}\|\psi(\la)-\sum_{i=0}^N\frac{1}{i!}D_{\la}^i\psi(0)\la^i\|_{C^k(T^0_2\cM)} = 0.
\end{align}
This implies that if $g(x,\la) = g_0+\psi(\la)$, then in local coordinates 
\begin{align}\label{eq2:20may13}
\frac{\partial}{\partial x^m}(g_{ij}(x,\la)) = \sum_{i=0}^{\infty} \frac{1}{i!} \frac{\partial^{i+1}}{\partial^i\la \partial x^m}(g_{ij}(x,0)) \la^i,
\end{align}
for all $1 \le i,j,m \le 3$.  Furthermore, by \eqref{eq3:20may13} the series \eqref{eq2:20may13} converges absolutely. 
The same holds for higher order partials with respect to $x^m$ if $2 \le k \le s-\frac3p$.  See Proposition~\ref{prop1:24may13}
for further details.
\end{remark}

\section{Main Results}\label{sec1:1nov12}  

Let $\cM$ be a closed, $3$-dimensional manifold which admits a metric with positive scalar curvature that also admits a non-flat metric $g_0\in \cA^{s,p}$ such that $R(g_0) = 0$.
Let $(g_{\la})$ be the analytic curve of metrics determined in Theorem~\ref{thm3:26oct12}.
Define the operator
\begin{align}\label{eq2:11mar13}
F((\phi,\bw),\la) = \left[\begin{array}{c} 
- \Delta_{\la}\phi
+ \frac{1}{8}\la \phi
+ \frac{\la^4}{12}\tau^2\phi^5
-a_{\bw,\lambda}
\phi^{-7}
-\frac{\la^2\kappa}{4}\rho\phi^{-3}
\\
 \mathbb{L}_{\la}\bw
+\frac{2\la^2}{3}D_{\la}\tau\phi^6
+\la^2\kappa\bj
\end{array}\right],
\end{align}
where 
$a_{\bw,\lambda} = \frac{1}{8}(\la^2\sigma+\mathcal{L}{\bw})_{ab}(\la^2\sigma+\mathcal{L}\bw)^{ab}$,
and
where $\Delta_{\la}, D_{\la}$ and $\mathbb{L}_{\la}$ are induced by $(g_{\la})$.
We view $F((\phi,\bw),\la)$ as a nonlinear operator, where
\begin{align}
F((\phi,\bw),\la) : C^{2,\alpha}\oplus C^{2,\alpha}(T\cM)\oplus \mathbb{R} \to C^{0,\alpha}\oplus C^{0,\alpha}(T\cM),
\end{align}
and if $F((\phi_0,\bw_0),\la_0) = (0,{\bf 0})$, then $(\phi_0,\bw_0)$ solves \eqref{eq1:26oct12} when $\la = \la_0$. 

Clearly we have that $F((1,{\bf 0}),0) = 0$.  Moreover, we will show that $\text{ker}D_XF((1,{\bf 0}),0)$ is one-dimensional.  We
can then use Theorem~\ref{thm1:29apr12} to parametrize a solution curve $((\phi(s),\bw(s)),\la(s))$ through $((1,{\bf 0}),0)$.
The first of our two main results
in this paper characterizes the behavior of solutions on this curve in a neighborhood of $((1,{\bf 0}),0)$.

\begin{theorem}\label{thm1:31oct12}
Let $\cM$ be a closed 3-dimensional manifold that admits an analytic, one-parameter family of metrics $g_{\la}\subset \cA^{s,p}$, $s > 3+\frac3p$,
such that for each $\la \in (-\delta,\delta)$, $R(g_{\la}) = \la$ and $g_{\la}$ has no conformal Killing fields.  
Suppose that $(\tau,\sigma,\rho,\bj) \in C^1(\cM)\times C(\cM)\times C(\cM) \times C(T\cM)$ is freely specified, and using this data and
the one-parameter family $g_{\la}$, define $F((\phi,\bw),\la)$ as in \eqref{eq2:11mar13}.  Then at least one of the following two possibilities must occur:
\begin{enumerate}
\item There exists a $\delta_0 \in (0,\delta)$ such that for all $\la \in (0,\delta_0)$ there exists $(\phi_{1,\la},\bw_{1,\la})$ and $(\phi_{2,\la},\bw_{2,\la})$ in $C^{2,\alpha}\oplus C^{2,\alpha}(T\cM)$ 
that together solve \eqref{eq1:26oct12} with $(\phi_{1,\la_0},\bw_{1,\la_0}) \ne (\phi_{2,\la_0},\bw_{2,\la_0})$,
\item There exists a $\delta_0 \in (0,\delta)$ such that for any $\la \in (-\delta_0, 0)$, there exists $(\phi_{\la},\bw_{\la}) \in C^{2,\alpha}\oplus C^{2,\alpha}(T\cM)$ that 
solves \eqref{eq2:11mar13}. 
\end{enumerate}
\end{theorem}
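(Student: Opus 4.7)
The plan is to apply the Liapunov--Schmidt framework of Theorem~\ref{thm1:29apr12} at the base point $((1,\mathbf{0}),0)$, upgrade the resulting solution curve to real-analytic using the work of Section~\ref{analytic}, and then extract the alternative from the leading-order behavior of $\lambda(s)$.

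First I would verify the Fredholm and transversality hypotheses. Every term of $F((\phi,\bw),\lambda)$ other than $-\Delta_\lambda\phi$ and $\mathbb{L}_\lambda\bw$ carries at least one factor of $\lambda$ or is quadratic in $\bw$ at $\bw=\mathbf{0}$, so the linearization at the base point collapses to
\begin{align*}
D_X F((1,\mathbf{0}),0)(\phi',\bw') = \bigl(-\Delta_0\phi',\;\mathbb{L}_0\bw'\bigr).
\end{align*}
On the closed manifold $\cM$, $\ker(-\Delta_0)$ consists of constants, and the no-conformal-Killing-fields hypothesis on $g_0$ gives $\ker\mathbb{L}_0=\{\mathbf{0}\}$, so $\ker D_X F$ is one-dimensional, spanned by $\hat v_0 := (1,\mathbf{0})$. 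Self-adjointness in the $L^2$ pairing gives the cokernel one-dimensional and also spanned by $(1,\mathbf{0})$. Only the term $\tfrac{1}{8}\lambda\phi$ contributes to $\partial_\lambda F$ at the base point, so $D_\lambda F((1,\mathbf{0}),0) = (\tfrac{1}{8},\mathbf{0})$; since the constant $\tfrac{1}{8}$ has nonzero mean on $\cM$ it does not lie in $R(-\Delta_0)$, giving the transversality $D_\lambda F \notin R(D_X F)$.

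Theorem~\ref{thm1:29apr12} now produces a $C^1$ solution curve $((\phi(s),\bw(s)),\lambda(s))$ through $((1,\mathbf{0}),0)$ with $\dot\lambda(0)=0$ and tangent $\dot x(0)=\hat v_0$, exhausting the solutions of $F=0$ in some Banach neighborhood. The analyticity of $F$ proved in Section~\ref{analytic} --- which rests on the analytic family $g_\lambda$ of Theorem~\ref{thm3:26oct12} together with the rational dependence of the conformal equations on $(\phi,\bw,\lambda)$ --- combined with the analytic clause of Theorem~\ref{thm1:25apr13}, upgrades the curve to real-analytic in $s$. Granted $\lambda(s)\not\equiv 0$, analyticity yields $\lambda(s)=a_k s^k + O(s^{k+1})$ with $a_k\neq 0$ and $k\geq 2$. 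If $k$ is even and $a_k>0$, then $\lambda$ is non-monotone through $s=0$, so for each small $\lambda_0>0$ there are $s_1<0<s_2$ with $\lambda(s_j)=\lambda_0$; since $\dot x(0)=\hat v_0\neq 0$ the curve is locally injective and the two points $(\phi(s_j),\bw(s_j))$ are distinct $C^{2,\alpha}\oplus C^{2,\alpha}(T\cM)$ solutions of \eqref{eq1:26oct12}, giving alternative~(1). In the remaining cases --- $k$ odd, or $k$ even with $a_k<0$ --- $\lambda(s)$ attains negative values on some subinterval of $(-\delta_1,\delta_1)\setminus\{0\}$, and the corresponding curve points furnish $C^{2,\alpha}\oplus C^{2,\alpha}(T\cM)$ solutions for every small $\lambda<0$, giving alternative~(2).

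The principal obstacle is establishing $\lambda(s)\not\equiv 0$. The trivial line $\{((c,\mathbf{0}),0):c>0\}$ is always contained in $F^{-1}(0)$, so at the Liapunov--Schmidt level the trivial branch $\lambda\equiv 0$, $(\phi(s),\bw(s))=(1+s,\mathbf{0})$ is an admissible candidate for the curve; eliminating it requires the Holst--Nagy--Tsogtgerel far-from-CMC existence theorem. Schematically, HNT supplies a positive-Yamabe solution $(\phi_\lambda,\bw_\lambda)$ for each small $\lambda>0$ with the scaled data $(g_\lambda,\lambda^2\tau,\lambda^2\sigma,\lambda^2\rho,\lambda^2\bj)$; uniform a priori bounds and Arzel\`a--Ascoli produce a subsequential $C^{2,\alpha'}$ limit as $\lambda\to 0^+$, which by the no-CKF hypothesis must lie on the trivial constant-$\phi$ line; under the hypothesis $\lambda(s)\equiv 0$ the local-uniqueness clause of Theorem~\ref{thm1:29apr12} at the relevant constant forbids such a limit of solutions with $\lambda\neq 0$, yielding the contradiction that completes the non-triviality step.
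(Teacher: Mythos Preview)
Your approach is essentially the same as the paper's: verify the Liapunov--Schmidt hypotheses at $((1,\mathbf{0}),0)$ exactly as you do, invoke Theorem~\ref{thm1:29apr12} to get the solution curve, upgrade it to real-analytic via Proposition~\ref{prop2:2nov12} and the analytic Implicit Function Theorem, and then read off the dichotomy from the behavior of $\lambda(s)$.

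The one substantive difference is in how $\lambda(s)\not\equiv 0$ is established. The paper argues directly: the far-from-CMC theorem (Theorem~\ref{farCMC}) provides a solution of $F=0$ for every sufficiently small $\lambda>0$, and since the curve exhausts \emph{all} solutions in a neighborhood of $((1,\mathbf{0}),0)$, each such $\lambda$ must be hit by $\lambda(s)$; analyticity plus this surjectivity onto $(0,\varepsilon)$ then rules out $\lambda\equiv 0$ and feeds the dichotomy. You instead run a compactness argument, passing to a subsequential limit of the HNT solutions as $\lambda\to 0^+$ and invoking local uniqueness at the limiting constant to derive a contradiction. Your route is more elaborate and carries its own loose ends (the uniform a~priori bounds, and the fact that local uniqueness from Theorem~\ref{thm1:29apr12} is only guaranteed near $((1,\mathbf{0}),0)$, not near an arbitrary $((c,\mathbf{0}),0)$), while the paper's route is shorter but leaves implicit why the HNT solutions for small $\lambda$ actually land in the neighborhood governed by the curve. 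In effect both arguments need the same missing ingredient---that the far-from-CMC solutions converge to the trivial branch as $\lambda\to 0^+$---and your sketch makes that dependence explicit, whereas the paper simply asserts the surjectivity. Once that is granted, your leading-term case analysis on $\lambda(s)=a_k s^k+O(s^{k+1})$ and the paper's argument from surjectivity and isolated zeros are equivalent ways of extracting the alternative.
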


Combining Theorem~\ref{thm2:26oct12} and Theorem~\ref{thm1:31oct12}, we obtain our second main result. 

\begin{theorem}\label{thm2:31oct12}
Let $\cM$ be a closed 3-dimensional manifold which admits both a metric with positive scalar curvature
and a metric $g_0$ with zero scalar curvature and no conformal Killing fields, where both metrics are contained
in $\cA^{s,p}$, $s>3+\frac3p$.  
Let $(\tau,\sigma,\rho,\bj) \in C^1(\cM)\times C(\cM)\times C(\cM) \times C(T\cM)$ be freely specified data for the 
CTT formulation of the constraints \eqref{eq1:26oct12}.   
Then in any neighborhood $U$ of $g_0$ there
exists a metric $g \in \cA^{s,p}$ and a $\la>0$ such that at least one the following must hold:
\begin{itemize}
\item $R(g) = \la$ and solutions to the CTT formulation of the Einstein Constraints with specified data ${(g, \la^2\tau, \la^2\sigma, \la^2\rho, \la^2\bj)}$
are non-unique

\item $R(g) = -\la$ and there exists a solution to CTT formulation of the Einstein Constraints with specified data ${(g,\la^2\tau, \la^2\sigma, \la^2\rho,\la^2\bj)}$.
\end{itemize}
Thus, in any neighborhood of a metric with zero scalar curvature and no conformal Killing fields, either there exists a Yamabe 
positive metric for which solutions to the CTT formulation are non-unique or
there exists a Yamabe negative metric for which far-from-CMC solutions to the CTT formulation exist. 
\end{theorem}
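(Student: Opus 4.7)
The plan is to deduce Theorem~\ref{thm2:31oct12} from Theorems~\ref{thm3:26oct12} and \ref{thm1:31oct12}, together with two short openness arguments used to line up hypotheses. Since $\cM$ admits a metric of positive scalar curvature and $g_0\in\cA^{s,p}$ has $R(g_0)=0$, Theorem~\ref{thm3:26oct12} supplies a $\delta>0$ and an analytic one-parameter family $\{g_\la\}_{\la\in(-\delta,\delta)}\subset \cA^{s,p}$ through $g_0$ with $R(g_\la)=\la$. This curve is the object around which the entire argument is built.

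Next the parameter $\delta$ must be shrunk so that the family satisfies all hypotheses of Theorem~\ref{thm1:31oct12}. First, since $\la\mapsto g_\la$ is continuous into $\cS_2^{s,p}$ (in fact analytic) and $g_0\in U$, there exists $\delta_1\in(0,\delta)$ with $g_\la\in U$ for $|\la|<\delta_1$. Second, only $g_0$ is assumed to have no conformal Killing fields, whereas Theorem~\ref{thm1:31oct12} requires this for every member of the family. By Remark~\ref{rem1:26feb13}, the coefficients of $\mathbb{L}_\la = -D_b(\cL\cdot)^{ab}$ depend continuously (in fact analytically) on $\la$, and each $\mathbb{L}_\la$ is an elliptic Fredholm operator of index zero on $T\cM$. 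Since $g_0$ has no conformal Killing fields, $\ker\mathbb{L}_0=\{\mathbf{0}\}$, so $\mathbb{L}_0$ is invertible; a standard perturbation argument then yields $\delta_2\in(0,\delta_1)$ with $\mathbb{L}_\la$ invertible for all $|\la|<\delta_2$, i.e.\ $g_\la$ has no conformal Killing fields on that interval.

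With these reductions, the family $\{g_\la\}_{|\la|<\delta_2}$ together with the prescribed data $(\tau,\sigma,\rho,\bj)$ meets every hypothesis of Theorem~\ref{thm1:31oct12}, which returns the alternative. In case (1), there is $\delta_0\in(0,\delta_2)$ and, for each $\la\in(0,\delta_0)$, two distinct $C^{2,\alpha}\oplus C^{2,\alpha}(T\cM)$ solutions of \eqref{eq1:26oct12}; choose any such $\la$ and set $g=g_\la\in U$, so $R(g)=\la>0$, and the two solutions are non-unique CTT solutions with data $(g,\la^2\tau,\la^2\sigma,\la^2\rho,\la^2\bj)$. In case (2), there is $\delta_0\in(0,\delta_2)$ and, for each $\la_-\in(-\delta_0,0)$, a solution of \eqref{eq1:26oct12}; set $\la=-\la_->0$ and $g=g_{\la_-}\in U$, so that $R(g)=\la_-=-\la$ and the furnished solution is a CTT solution with data $(g,\la^2\tau,\la^2\sigma,\la^2\rho,\la^2\bj)$. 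The Yamabe-class reinterpretation at the end of the theorem is immediate from the sign of $R(g)$.

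The bulk of the mathematical content thus sits in Theorem~\ref{thm1:31oct12}; the argument above is essentially a repackaging. The only non-bookkeeping step is the propagation of the no-conformal-Killing-field property from $g_0$ to nearby $g_\la$, and this soft Fredholm perturbation argument is the one point a formal write-up should record with care.
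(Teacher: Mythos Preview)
Your proof is correct and follows essentially the same route as the paper: invoke Theorem~\ref{thm3:26oct12} to obtain the analytic family $g_\la$, propagate the no-conformal-Killing-field property from $g_0$ to nearby $g_\la$, and then apply Theorem~\ref{thm1:31oct12}. The only substantive difference is in the middle step: the paper simply cites the fact (from \cite{BPS05}) that metrics without conformal Killing fields form an open set, whereas you supply a direct Fredholm perturbation argument for the invertibility of $\mathbb{L}_\la$; both are valid, and your version is self-contained while the paper's is shorter. You are also more explicit than the paper about shrinking $\delta$ so that $g_\la\in U$ and about unpacking the two cases of Theorem~\ref{thm1:31oct12} into the stated conclusion, which is useful bookkeeping the paper leaves implicit.
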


\begin{remark}\label{rem1:5mar13}
An important point of Theorem~\ref{thm2:31oct12} is that the function $\tau$ is an arbitrary, continuously differentiable function.
Therefore this function is allowed to have zeroes and is free of any near-CMC conditions.
\end{remark}

\begin{remark}\label{rem1:22apr13}
Here we do not prove the existence of manifolds $\cM$ that admit both a metric of positive scalar curvature
and a metric with zero scalar curvature and no conformal Killing fields.  Similar assumptions are
made in \cite{ACI08}, and using the results in \cite{BPS05,B76,EIM86}, we can conclude that using a suitable topology,
the set of metrics on a given manifold $\cM$ which
have no homothetic Killing fields is generic in the set of metrics with zero scalar curvature.  More generally, the set
of metrics with no conformal Killing fields is a generic set in the space of metrics on $\cM$ \cite{BPS05}.  We suspect that
these results can be used to show,
under possibly additional regularity assumptions,
that manifolds which admit both a metric of positive scalar curvature and a metric 
with zero scalar curvature and no conformal Killing vectors exist. 
\end{remark}

\section{Properties of $F((\phi,\bw),\la)$}\label{analytic}

In this section we discuss some key properties of the operator $F((\phi,\bw),\la)$ introduced in \eqref{eq2:11mar13}.  
Our general strategy to prove the main results in Section~\ref{sec1:1nov12} 
will be to apply a Liapunov-Schmidt reduction to this operator.
In order to apply this reduction, we seek a point $((\phi_0,\bw_0),\la_0)$ for which
the linearization $D_XF((\phi_0,\bw_0),\la_0)$ has a nontrivial kernel, where $X= (\phi,\bw)$.  

In the following discussion, we assume that $\cM$ is a closed, $3$-dimensional manifold that admits an analytic, one-parameter
family of metrics satisfying $R(g_{\la}) = \la$ for $\la \in (-\delta,\delta)$.  Additionally assume that each $g_{\la}$ has no conformal Killing fields
and $(g_{\la}) \subset \cA^{s,p}$, where $s > 3+\frac3p$. 
Assuming that $(\tau, \sigma, \rho,\bj)$ is given
data for the conformal formulation, we may define the operator $F((\phi,\bw),\la)$ as in \eqref{eq2:11mar13} and we
have the following result:

\begin{proposition}\label{prop1:2nov12}
Let $F((\phi,\bw),\la)$ be the nonlinear operator defined in \eqref{eq1:1nov12}.  Then the following holds:
\begin{align}\label{eq2:2nov12}
D_XF((1,{\bf 0}),0) = \left[ \begin{array}{cc} -\Delta & 0 \\ 0 & \mathbb{L} \end{array}\right] \quad \text{and} \quad \text{ker}(D_XF((1,{\bf 0}),0)) = \text{span}\left\{ \left[\begin{array}{c} 1\\{\bf 0}\end{array}\right]\right\} ,
\end{align}
where $\Delta$ and $\mathbb{L}$ are the Laplace-Beltrami operator and the negative divergence of the conformal Killing operator induced by $g_0$. 
\end{proposition}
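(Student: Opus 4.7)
The plan is to compute $D_X F((1,\mathbf{0}),0)$ by a direct term-by-term differentiation and then identify the kernel using the assumption that $g_0$ has no conformal Killing fields. Almost all of the nonlinear terms in $F((\phi,\bw),\la)$ carry an explicit factor of $\la$ (either $\la$, $\la^2$, or $\la^4$), so their contributions to the linearization at the point $\la=0$ will drop out regardless of what variations in $\phi$ and $\bw$ we take. The remaining nontrivial contributions are exactly the linear principal parts $-\Delta_\la \phi$ and $\mathbb{L}_\la \bw$, which at $\la=0$ reduce to $-\Delta$ and $\mathbb{L}$ determined by $g_0$.

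First I would compute $\partial_\phi F_1$ and $\partial_\bw F_1$, where $F_1$ denotes the scalar (Hamiltonian) component. Differentiating $-\Delta_\la \phi + \tfrac{1}{8}\la \phi + \tfrac{\la^4}{12}\tau^2\phi^5 - \tfrac18(\la^2\sigma+\mcal L\bw)_{ab}(\la^2\sigma+\mcal L\bw)^{ab}\phi^{-7} - \tfrac{\la^2\kappa}{4}\rho\phi^{-3}$ in $\phi$, every term except $-\Delta_\la\phi$ picks up a positive power of $\la$ from its coefficient (or from the factor $(\la^2\sigma+\mcal L\bw)$ when evaluated at $\bw=\mathbf 0$), so at $((1,\mathbf 0),0)$ the $\phi$-derivative collapses to $-\Delta h$. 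Differentiating in $\bw$ produces only the term $-\tfrac14(\la^2\sigma+\mcal L\bw)_{ab}(\mcal L\bh)^{ab}\phi^{-7}$, which vanishes at $\bw=\mathbf 0, \la=0$. For the momentum component $F_2 = \mathbb{L}_\la\bw + \tfrac{2\la^2}{3}D_\la\tau\,\phi^6 + \la^2\kappa\bj$, the $\phi$-derivative is proportional to $\la^2$ and therefore vanishes at $\la=0$, while the $\bw$-derivative gives $\mathbb{L}_\la\bh$, which at $\la=0$ becomes $\mathbb{L}\bh$. Assembling these four pieces yields the block-diagonal form in \eqref{eq2:2nov12}.

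For the kernel characterization, a pair $(h,\bh)\in C^{2,\alpha}\oplus C^{2,\alpha}(T\cM)$ lies in $\ker D_XF((1,\mathbf 0),0)$ if and only if $-\Delta h=0$ and $\mathbb{L}\bh=0$. On a closed manifold, harmonic scalars are constants, so $h\equiv c\in\mathbb{R}$. For the vector equation, I would pair $\mathbb{L}\bh$ with $\bh$ in $L^2$ and integrate by parts, using $\mathbb{L}\bh = -D_b(\mcal L\bh)^{ab}$, to obtain
\begin{equation*}
0 = \int_{\cM}\langle \mathbb{L}\bh,\bh\rangle\,dV_{g_0} = \tfrac12\int_{\cM}(\mcal L\bh)_{ab}(\mcal L\bh)^{ab}\,dV_{g_0},
\end{equation*}
which forces $\mcal L\bh=0$; that is, $\bh$ is a conformal Killing field of $g_0$. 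By hypothesis $g_0$ admits none, so $\bh\equiv\mathbf 0$. Hence the kernel is exactly $\mathrm{span}\{(1,\mathbf 0)\}$.

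The calculation itself is routine; the main care required is twofold. First, one must verify that $F$ is indeed Fréchet differentiable as a map between the stated H\"older spaces so that the formal derivatives above are the genuine linearization — this is standard once one notes that the nonlinear dependence on $\phi$ is smooth away from $\phi=0$ and that multiplication and $\mcal L$ are bounded operators between the appropriate H\"older spaces. Second, the integration by parts on $\cM$ requires only closedness of the manifold and $C^{2,\alpha}$ regularity of $\bh$, both of which are in hand; the use of the no-conformal-Killing-field assumption to conclude $\bh=\mathbf 0$ is the only place this hypothesis on $g_0$ enters, and without it the kernel would be enlarged by the space of conformal Killing fields.
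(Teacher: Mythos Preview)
Your proof is correct and follows essentially the same approach as the paper: compute the linearization via directional derivatives (the paper phrases it as the G\^ateaux derivative coinciding with the Fr\'echet derivative and taking $\lim_{t\to 0}\frac{F((1,\mathbf 0)+t(\phi,\bw),0)-F((1,\mathbf 0),0)}{t}$), then read off the kernel using the no-conformal-Killing-field hypothesis. Your treatment is in fact more detailed than the paper's, which simply asserts the kernel claim; your integration-by-parts argument showing $\mathbb{L}\bh=0\Rightarrow\mcal L\bh=0$ makes explicit the step the paper leaves implicit.
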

\begin{proof}
This follows from the fact that the Gauteaux derivative and Frechet derivative coincide in a neighborhood of $((1,{\bf 0}),0)$.  Therefore,
for $(\phi,\bw)$ satisfying
$$\|(\phi,\bw)\|_{C^{2,\alpha}(\cM)\oplus C^{2,\alpha}(T\cM)}= 1,$$ 
we compute 
$$
\lim_{t \to 0} \frac{F((1,{\bf 0}) + t(\phi,\bw)),0) - F((1,{\bf 0}),0)}{t}
$$
to obtain \eqref{eq2:2nov12}.  Given that $g_0$ has no conformal Killing fields,
it is clear that the kernel of \eqref{eq2:2nov12} is spanned by  $\left[\begin{array}{c} 1\\{\bf 0}\end{array}\right]$.
\end{proof}

\begin{remark}\label{rem1:11mar13}
Clearly the operator $D_XF((1,0),{\bf 0})$ is a self-adjoint operator.  Therefore, Proposition~\ref{prop1:2nov12} also implies that
$\text{ker}((D_XF(1,{\bf 0}),0)^*)  = \left[\begin{array}{c} 1\\{\bf 0}\end{array}\right]$.
\end{remark}

We will also require that the operator $F((\phi,\bw),\la)$ have certain regularity properties in a neighborhood of the point
$((1,{\bf 0}),0)$.  For this we have the following proposition:
\begin{proposition}\label{prop2:2nov12}
In a neighborhood of $((1,{\bf 0}),0)$, the nonlinear operator $F((\phi,\bw),\la)$ is an analytic operator
between the spaces
$$
C^{2,\alpha}\oplus C^{2,\alpha}(T\cM) \oplus \mathbb{R} \to C^{0,\alpha} \oplus C^{0,\alpha}(T\cM).
$$
\end{proposition}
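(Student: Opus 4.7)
The plan is to decompose $F((\phi,\bw),\la)$ term by term and verify that each constituent operation is analytic between the appropriate H\"older spaces, then invoke closure of the class of analytic operators between Banach spaces under finite sums, products, and compositions. Because $C^{2,\alpha}(\cM)$ and $C^{2,\alpha}(T\cM)$ are Banach algebras under pointwise multiplication and tensor contraction (against smooth coefficients), every polynomial term, such as $\tau^{2}\phi^{5}$ and $(\la^{2}\sigma+\cL\bw)_{ab}(\la^{2}\sigma+\cL\bw)^{ab}$, defines an analytic map into $C^{0,\alpha}$ via its (finite) Taylor expansion in the variables $(\phi,\bw,\la)$. The inverse-power terms $\phi^{-7}$ and $\phi^{-3}$ are handled by noting that $\phi\mapsto\phi^{-k}$ is analytic on the open set $\{\phi\in C^{2,\alpha}:\phi>0\}$; in a $C^{2,\alpha}$-neighborhood of the constant function $\phi\equiv 1$ the binomial series
\[
(1+(\phi-1))^{-k}=\sum_{n=0}^{\infty}\binom{-k}{n}(\phi-1)^{n}
\]
converges absolutely in the Banach algebra norm, yielding an analytic representation.

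Next, the $\la$-dependence carried through the geometry is controlled using Theorem~\ref{thm3:26oct12} and Remark~\ref{rem1:26feb13}: the one-parameter family $g_{\la}$ is analytic in $\la$ as a map into $W^{s,p}$ with $s>3+3/p$, and the Sobolev embedding places $g_{\la}$ analytically into $C^{2,\alpha}$ for some $\alpha\in(0,1)$. Consequently each coordinate expression $g_{\la}^{ij}$, each first derivative $\partial_{k}g_{\la}^{ij}$ appearing in the Christoffel symbols of $\Delta_{\la}$, $\mathbb{L}_{\la}$, and $D_{\la}$, and each second derivative appearing in the principal parts, depends analytically on $\la$ as a $C^{0,\alpha}$-valued coefficient in every coordinate chart. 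Matrix inversion is analytic on the open set of invertible matrices, so $g_{\la}^{-1}$ is analytic in $\la$ as well. Since $\Delta_{\la}$ and $\mathbb{L}_{\la}$ are linear in $(\phi,\bw)$ with coefficients assembled polynomially from $g_{\la}$, $g_{\la}^{-1}$, and their first derivatives, they define analytic operators from $C^{2,\alpha}\oplus C^{2,\alpha}(T\cM)\oplus(-\delta,\delta)$ into $C^{0,\alpha}\oplus C^{0,\alpha}(T\cM)$; the same analysis applies to the term $D_{\la}\tau\,\phi^{6}$, using $\tau\in C^{1}$.

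The main technical obstacle is the bookkeeping required to pass from local to global: after writing each component of $F$ in a coordinate chart, one must bound the Taylor coefficients uniformly using a fixed partition of unity subordinate to a finite atlas, exactly as in the proof of Theorem~\ref{thm1:26oct12}, so that absolute convergence in each chart assembles into absolute convergence in the global H\"older norm. With this patching step in hand, together with the algebraic closure of analytic operators under finite sums, products, and compositions on open sets, every term appearing in $F((\phi,\bw),\la)$ is analytic on a neighborhood of $((1,\mathbf{0}),0)$, and summing the terms completes the argument.
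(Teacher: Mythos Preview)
Your proposal is correct and follows the same logical skeleton as the paper's proof: verify analyticity locally on each chart using the analyticity of $g_{\la}$ in $\la$ (from Theorem~\ref{thm3:26oct12} and Remark~\ref{rem1:26feb13}), handle the negative powers of $\phi$ via a binomial expansion near $\phi=1$, and globalize by a finite partition of unity. The paper carries this out far more explicitly: it first records the coordinate form of the two constraint operators (Propositions~\ref{prop1:26feb13} and~\ref{prop2:26feb13}), expands every $\la$-dependent coefficient function $f_1^{ab},\dots,f_7^{a}$ and $h_1^{abcd},\dots,h_4^{ad}$ as a power series in $\la$, multiplies these against the binomial series for $(\phi+1)^{-7}$, $(\phi+1)^{-3}$, $(\phi+1)^{5}$, $(\phi+1)^{6}$, and then regroups the resulting double series into an explicit representation $F(\bx_0+\bh)=\sum_k M_k(\bx_0)\bh^k$ of $k$-linear power operators, checking absolute convergence term by term. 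Your route instead packages all of this into the closure of analytic maps under finite sums, products, and compositions in a Banach algebra, together with analyticity of matrix inversion. The paper's explicit route buys a concrete description of the multilinear coefficients $M_k$, which is potentially useful downstream for computing higher derivatives of the solution curve $\la(s)$; your modular route is shorter and sidesteps the bookkeeping of rearranging absolutely convergent multi-indexed series.
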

 
\begin{proof}

Writing out $F((\phi,\bw),\la)$ on a given chart element $U_j$, the Hamiltonian constraint, which we will denote by $F_1((\phi,\bw),\la)$, takes the form 
\begin{align}\label{eq1:26feb13}
F_1(( & \phi,\bw),\la) = \\
&f_1^{ab}(\la)\partial_a\partial_b\phi+f_2^a(\la)\partial_a\phi+\frac18\la\phi+\frac{\la^4}{12}\tau^2\phi^5-\frac{\phi^{-7}}{8}\left(f_3^{abcd}(\la)\partial_aw_b\partial_cw_d \right. +\nonumber \\
&\left.f_4^{abc}(\la)\partial_aw_bw_c+f^{ab}_5(\la)w_aw_b+\la^2f_6^{ab}\partial_aw_b+\la^2f_7^a(\la)w_a+\la^4\sigma^2\right)-\frac{\kappa\la^2}{4}\rho\phi^{-3},\nonumber
\end{align}
where $f^{ab}_1,...,f^{a}_7$ are functions in $C^{1,\alpha}(U_j \times (-\delta,\delta))$, $\alpha = 1+[\frac 3p]-\frac 3p$, 
that are formed from sums and products of the first and second derivatives of the components of $g_{\la}$ with respect to the spatial coordinate functions $x^i$.  See Proposition~\ref{prop1:26feb13} for details.  Given that
the $g_{\la}$ are analytic in $\la \in (-\delta,\delta)$, Remark~\ref{rem1:26feb13} and Proposition~\ref{prop1:24may13} imply that these functions are also analytic for $\la \in (-\delta,\delta)$.  Similarly,
the momentum constraint $F_2((\phi,\bw),\la)$ takes the form 
\begin{align}\label{eq2:26feb13}
F_2((\phi,\bw),&\la) = \\
&h_1^{abcd}(\la)\partial_a\partial_bw_c+h_2^{abd}(\la)\partial_aw_b+h^{ad}_3(\la)w_a+\frac23\la^2h_4^{ad}(\la)\partial_a\tau\phi^6+\la^2\kappa j^d, \nonumber
\end{align} 
where $h^{abcd}_1,..., h_4^{ad} \in C^{1,\alpha}(U_j\times (-\delta,\delta))$ and are analytic with respect to $\la \in (-\delta,\delta)$.  See Proposition~\ref{prop2:26feb13} for further discussion. 

Expanding $f^{ab}_1,...,f^{a}_7$ about $\la = 0$ and $(\phi+1)^5, (\phi+1)^{-7}, (\phi+1)^{-3}$ about $\phi = 0$, we obtain the following power series
representation for the Hamiltonian constraint for $((\phi,\bw),\la)$ in a neighborhood of $((1,{\bf 0}),0)$:
\begin{align}\label{eq1:2mar13}
F_1((&\phi+1,\bw),\la)= \\
&\sum_{i=0}^{\infty} \frac{1}{i!}\frac{\partial^if_1^{ab}(0)}{\partial \la^i} \la^i\partial_a\partial_b\phi + 
 \sum_{i=0}^{\infty} \frac{1}{i!}\frac{\partial^if_2^{a}(0)}{\partial \la^i} \la^i\partial_a\phi + \frac18\la(\phi+1) +\sum_{i=0}^5\frac{\tau^2}{12}\binom{5}{i}\phi^i\la^4 \nonumber\\  
+& \sum_{i=0}^{\infty} \frac{(-1)^{i+1}(i+2)!}{8(i!)}\kappa\rho\phi^i\la^2+ \sum_{i,j=0}^{\infty} \frac{(-1)^{i+1}(i+6)!}{8(6!)(i!)(j!)}\frac{\partial^jf_3^{abcd}(0)}{\partial \la^j}\phi^i \la^j(\partial_aw_b)(\partial_cw_d)    \nonumber  \\
+&\sum_{i,j=0}^{\infty} \frac{(-1)^{i+1}(i+6)!}{8(6!)(i!)(j!)}\frac{\partial^jf_4^{abc}(0)}{\partial \la^j}\phi^i \la^j(\partial_aw_b)w_c \nonumber \\
+&\sum_{i,j=0}^{\infty} \frac{(-1)^{i+1}(i+6)!}{8(6!)(i!)(j!)}\frac{\partial^jf_5^{ab}(0)}{\partial \la^j}\phi^i \la^j(w_a)(w_b) \nonumber\\
+&\sum_{i,j=0}^{\infty} \frac{(-1)^{i+1}(i+6)!}{8(6!)(i!)(j!)}\frac{\partial^jf_6^{ab}(0)}{\partial \la^j}\phi^i \la^{j+2}(\partial_aw_b) \nonumber \\
+&\sum_{i,j=0}^{\infty} \frac{(-1)^{i+1}(i+6)!}{8(6!)(i!)(j!)}\frac{\partial^jf_7^{a}(0)}{\partial \la^{j}}\phi^i \la^{j+2}(w_a) \nonumber \\
+& \sum_{i=0}^{\infty} \frac{(-1)^{i+1}(i+6)!}{8(6!)(i!)}\phi^i \la^4\sigma^2. \nonumber
\end{align}  
Similarly, by expanding out $h_1^{abcd}, h_2^{abd}, h_3^{ad}, h_4^{ad}$ with respect to $\la$ about $\la = 0$ and ${(\phi+1)^6}$ about $\phi = 0$, we obtain a power series representation of
the momentum constraint for $((\phi,\bw),\la)$ in a neighborhood of $((1,{\bf 0}),0)$:
\begin{align}\label{eq2:2mar13}
F_2((&\phi+1,\bw),\la)= \\
& \sum_{i=0}^{\infty}\frac{1}{(i!)}\frac{\partial^ih_1^{abcd}(0)}{\partial \la^i}\la^i\partial_a(\partial_bw_c)+  
\sum_{i=0}^{\infty}\frac{1}{(i!)}\frac{\partial^ih_2^{abd}(0)}{\partial \la^i}\la^i(\partial_aw_b) \nonumber \\
+& \sum_{i=0}^{\infty}\frac{1}{(i!)}\frac{\partial^ih_3^{ad}(0)}{\partial \la^i}\la^i(w_a)
 + \sum_{i=0}^{\infty}\sum_{j=0}^6\frac{2}{3(i!)}\binom{6}{j}\frac{\partial^ih_4^{ad}(0)}{\partial \la^i}\partial_a\tau \phi^j\la^{i+2}  + \la^2\kappa j^d. \nonumber  
\end{align}
The regularity of the coefficients $f_1,\cdots, f_7$, Proposition~\ref{prop1:24may13}, Remark~\ref{rem1:26feb13} and the fact that $\phi \in C^{2,\alpha}$ imply that the series in \eqref{eq1:2mar13} 
converges to $F_1((\phi,\bw),\la)$ in $C^{0,\alpha}(\cM)$ for $|\phi| <1$ and $|\la| < \delta$.  Similarly, 
the series in \eqref{eq2:2mar13} converges to $F_2((\phi,\bw),\la)$ in $C^{0,\alpha}(T\cM)$ for $|\phi|<1$ and $|\la|<\delta$.

Let $\bh = ((\phi,\bw),\la)$, $\bx_0 = ((1,{\bf 0}),0)$.  We can rewrite the power series representations of $F_1$ and $F_2$ in Eqs. \eqref{eq1:2mar13} and \eqref{eq2:2mar13} to express $F((\phi+1,\bw),\la) = F(\bx_0+\bh)$ as
a power series of multilinear operators.  For a given multi-index $\alpha = (\alpha_1,\alpha_2,\alpha_3)$, $|\alpha| = k$, define $D^{\alpha}F_i(\bx_0+h)|_{h=0}$ to be the resulting
operator obtained by partially differentiating the power series representations of $F_1(\bx_0+\bh)$ and $F_2(\bx_0+\bh)$ with respect to the multi-index $\alpha = (\alpha_1,\alpha_2,\alpha_3)$, where
we differentiate $\alpha_1$ times with respect to $\phi$, $\alpha_2$ times with respect to $\bw$, and $\alpha_3$ times with respect to $\la$.  
Here $D^{\alpha}F_i(\bx_0+\bh)|_{h=0}$ is an $\alpha_1$-multilinear operator on $C^{2,\alpha}$, an $\alpha_2$-multilinear operator on $C^{2,\alpha}(T\cM)$, and
an $\alpha_3$-multilinear operator on $\mathbb{R}$.  Then by a slight abuse of notation, we may succinctly write
\begin{align}\label{eq1:29apr13}
M_{i,\alpha}({\bf x}_0)h^{\alpha} = \text{\hspace{.75 in}} &\\
D^{\alpha}F_i(\bx_0+h)|_{h=0}&(\underbrace{\phi,\cdots,\phi},\underbrace{\bw,\cdots,\bw},\underbrace{\la,\cdots,\la}), \quad  \text{for $i =1,2$}.\nonumber \\
&\text{\hspace{.2 in}} \alpha_1 ~~ \text{times}  \text{\hspace{.17 in}}   \alpha_2 ~~\text{times} \text{\hspace{.16 in} } \alpha_3~~\text{times} \nonumber
\end{align}
We then define a $k$-linear operator for $\bh \in C^{2,\alpha}\times C^{2,\alpha}(T\cM) \times \mathbb{R}$ by letting
\begin{align}
M_k(\bx_0)\bh^k = \left[\begin{array}{c}\sum_{\alpha: ~|\alpha|= k} \frac{k!}{(\alpha_1!)( \alpha_2!)( \alpha_3!)}M_{1,\alpha}h^{\alpha}\\ \sum_{\alpha:~|\alpha| = k}\frac{k!}{(\alpha_1!)( \alpha_2!)( \alpha_3!)}M_{2,\alpha}h^{\alpha} 
\end{array}\right] \in C^{0,\alpha}\times C^{0,\alpha}(T\cM),
\end{align}
where the sums are over all three-tuples $(\alpha_1,\alpha_2,\alpha_3)$ such that $\alpha_i \ge 0$.

Then by Eqs.~\eqref{eq1:2mar13}-\eqref{eq2:2mar13} we have that on each chart element $U_j$,
\begin{align}\label{eq2:29apr13}
F(\bx_0+ \bh,\la) = \left[\begin{array}{c} F_1(\bx_0+\bh,\la) \\ F_2(\bx_0+\bh,\la)\end{array}\right] = \sum_{k=1}^{\infty} M_k(\bx_0)\bh^k.
\end{align}
This follows since the expression $M_k({\bf x}_0)h^k$ is obtained by grouping all terms of combined order $k$ in $\phi, \bw$ and $\la$ in Eqs. \eqref{eq1:2mar13}-\eqref{eq2:2mar13}.  
We may rearrange the series representations of $F_1(\bx_0+\bh)$ and $F_2(\bx_0+\bh)$ given that the series in Eqs. \eqref{eq1:2mar13}-\eqref{eq2:2mar13} converge absolutely in the
sense of \eqref{eq2:25apr13} for $|\la| < \delta$ and the power series expansions involving $(\phi+1)^{-7}, (\phi+1)^{-3}$ converge uniformly for $|\phi|<1$.  See Proposition~\ref{prop1:24may13}
for details.  By the same reasoning, we also have that on each $U_j$ the series representation \eqref{eq2:29apr13} will converge absolutely in the sense of \eqref{eq2:25apr13}.
By a partition of unity argument, we can conclude that the operator $F((\phi,\bw),\la)$ is an analytic operator if $|\phi|<1$ and $|\la |< \delta$.
\end{proof}

\section{Proof of Main Results}\label{mainproof}

In this section we will parametrize solutions to $F((\phi,\bw),\la) = 0$ in a neighborhood of $((1,{\bf 0}),0)$, where we recall that
\begin{align}\label{eq1:1nov12}
F((\phi,\bw),\la) =  \left[\begin{array}{c}
- \Delta_{\la}\phi
+ \frac{1}{8}\la \phi
+ \frac{\la^4}{12}\tau^2\phi^5
- a_{\bw,\lambda}
 \phi^{-7}
-\frac{\la^2\kappa}{4}\rho\phi^{-3}\\
  \mathbb{L}_{\la}\bw
+ \frac{2\la^2}{3}D_{\la}\tau\phi^6
+ \la^2\kappa\bj \end{array}\right],
\end{align}
where 
$a_{\bw,\lambda} = \frac{1}{8}(\la^2\sigma+\mathcal{L}{\bw})_{ab}(\la^2\sigma+\mathcal{L}\bw)^{ab}$,
and where
$(\tau,\sigma,\rho,\bj) \in C^1(\cM)\times C(\cM)\times C(\cM) \times C(T\cM)$ is specified data and 
$g_{\la}$ is a one-parameter family of metrics defining the operators $\Delta_{\la}, \mathbb{L}_{\la}$ and $D_{\la}$.
Our approach is to apply the Liapunov-Schmidt reduction in Section~\ref{LSchmidt} to \eqref{eq1:1nov12} to determine
an explicit solution curve through the point $((1,{\bf 0}),0)$.  The analyticity of $F((\phi,\bw),\la)$ and $g_{\la}$ will imply that
this solution curve is analytic in its parametrizing variable.  This result along with
the preexisting far-from-CMC solution theory established in \cite{HNT08,HNT09} will imply the results in Section~\ref{sec1:1nov12}. 

\begin{proof}[Proof of Theorem~\ref{thm1:31oct12}]
Let $g_{\la}$ be the one-parameter family of metrics defined in Theorem~\ref{thm1:31oct12}.  
Given data $(\tau,\sigma,\rho,\bj) \in C^1(\cM)\times C(\cM)\times C(\cM) \times C(T\cM)$ for the conformal equations,
we then define an associated one-parameter family of nonlinear operators ${F((\phi,\bw),\la)}$ as in \eqref{eq1:1nov12}.
By Proposition~\ref{prop1:2nov12} we know that $\text{ker}D_XF((1,{\bf 0}),0)$
takes the form
$$
D_XF(1,{\bf 0},0) =  \left[ \begin{array}{cc} -\Delta &  0\\
0 & \mathbb{L}\end{array} \right],
$$
and that 
$\text{ker}(D_XF((1,{\bf 0}),0))$ and $\text{ker}(D_XF((1,{\bf 0}),0)^*)$ are spanned by $\hat{v}_0 =\tiny{\left[\begin{array}{c}1\\ 0 \end{array}\right]}$.

We decompose  
$$
X = C^{2,\alpha}(\cM)\oplus C^{2,\alpha}(T \cM) = X_1 \oplus X_2 ,
$$
and
$$
Y = C^{0,\alpha}(\cM)\oplus C^{0,\alpha}(T \cM) = Y_1\oplus Y_2,
$$ 
where
\begin{align}\label{eq5:31july12}
&X_1 = \text{ker}(D_XF((1,{\bf 0}),0)), \\
&X_2 =  R(D_XF((1,{\bf 0}),0)^*)\cap (C^{2,\alpha}(\cM)\oplus C^{2,\alpha}(T\cM)),\\
&Y_1 = R(D_XF((1,{\bf 0}),0))\cap (C^{0,\alpha}(\cM)\oplus C^{0,\alpha}(T\cM)), \hspace{2mm}\\ 
&Y_2 = \text{ker}(D_XF((1,{\bf 0}),0)^*).
\end{align}
For justification that we can decompose $X$ and $Y$ in the manner described above,
see the appendix of \cite{CMH12}.
\medskip

Let $P:X\to X_1$ and $Q:Y\to Y_2$ be projection operators defined using $\hat{v}_0$. 
Then by writing 
$$\left[ \begin{array}{c}\phi\\ \bw\end{array}\right]= P\left[ \begin{array}{c}\phi\\ \bw\end{array}\right]+ (I-P)\left[ \begin{array}{c}\phi\\ \bw\end{array}\right] = v+y,$$ 
where $v\in X_1$ and $y \in X_2$, the Implicit Function Theorem~\ref{thm1:25apr13} applied to
\begin{align}\label{eq2:22aug12}
(I-Q)F(v+y,\la) = 0,
\end{align}
implies that solutions to $F((\phi,\bw),\la)=0$ satisfy
\begin{align}\label{eq2:19july12}
\Phi(v,\la) = QF(v+\psi(v,\la),\la) = 0,
\end{align}
in a neighborhood of $((1,{\bf 0}),0)$, where $y = \psi(v,\la)$ in this neighborhood and where ${(0,{\bf 0}) = \psi((1,{\bf 0}),0)}$.

By Proposition~\ref{prop2:2nov12} and Theorem~\ref{thm1:25apr13} the curve $\psi(v,\la)$ 
is analytic in $v$ and $\la$.
Furthermore, 
$$D_{\la}F((1,{\bf 0}),0) = \left[\begin{array}{c} 1/8 \\ 0 \end{array}\right]  \in X_1.$$ 
Therefore, $D_{\la}F((1,{\bf 0}),0)\notin R(D_XF((1,0),0))$, and we can apply Theorem~\ref{thm1:29apr12}
to conclude there exists a $\delta >0$ such that all
solutions to $F((\phi,\bw),\la) = 0$ in a neighborhood of $((1,{\bf 0}),0)$ are parametrized by 
$s \in (-\delta,\delta)$ in the following way: 
\begin{align}\label{eq4:2nov12}
&(\phi(s),\bw(s)) = \hat{v}_0 + s\hat{v}_0 + \psi(\hat{v}_0+s\hat{v}_0, \gamma(\hat{v}_0+s\hat{v}_0)),\\ 
&\la(s) = \gamma(s\hat{v}_0+\hat{v}_0) \nonumber.
\end{align}
In \eqref{eq4:2nov12}, $\gamma: U \subset X_1 \to (-\ee, \ee) \subset \mathbb{R}$ is analytic in a neighborhood of $(1,{\bf 0})$, 
and is obtained by applying the Implicit Function Theorem~\ref{thm1:25apr13}
to the operator $QF(v+\psi(v,\la),\la)$, which is analytic in a neighborhood of $((1,{\bf 0}),0)$.  We write $v = (s+1)\hat{v}_0$ given
that $X_1$ is $1$-dimensional.

Now we observe that if we choose $\la$ sufficiently small so that the
size conditions in the positive Yamabe far-from-CMC results in Theorem~\ref{farCMC} are
satisfied, then for any $\la>0$ sufficiently small, solutions to $F((\phi,\bw),\la) = 0$ will exist. 
Therefore, after possibly shrinking the intervals $(-\delta,\delta)$ and $(-\ee,\ee)$, there must exist an $s \in (-\delta,\delta)$ such that ${\la(s) = \gamma(s \hat{v}_0 + \hat{v}_0) = \la}$ for each $\la \in (0,\ee)$.  Now we 
summarize the properties of the function $\la(s)$.
\begin{itemize}
\item $\la(s)$ is analytic on the interval $(-\delta,\delta)$.

\item For any $\la \in (0,\ee)$, there exists an $s \in (-\delta,\delta)$ so that $\la(s) = \la$.

\item $\la(0) = 0$.
\end{itemize} 

The first two properties tell us that the interval $s \in (-\delta,\delta)$ cannot contain a set of zeros of $\la(s)$
with a limit point in $(-\delta,\delta)$.  In particular, we conclude that $\la(s)$ cannot vanish on any subinterval $I \subset (-\delta,\delta)$.
Therefore, one of following two possibilities must occur:
\medskip
\begin{enumerate}
\item There exists $\la \in (0,\ee)$ and $s_1,s_2 \in (-\delta,\delta)$,  $s_1 \ne s_2$, such that \\
$\la(s_1) =\la(s_2) = \la$.\label{eq1:5mar13}\\
\item There exists $\la \in (-\ee,0)$ and $s_0 \in (-\delta,\delta)$ such that $\la(s_0) = \la$.\label{eq2:5mar13}
\end{enumerate}
\smallskip

If \eqref{eq2:5mar13} occurs, then
\begin{align}\label{eq3:5nov13}
&(\phi_0,\bw_0) = (\phi(s_0),\bw(s_0)) = \hat{v}_0 + s_0\hat{v}_0 + \psi(\hat{v}_0+s_0\hat{v}_0, \gamma(\hat{v}_0+s_0\hat{v}_0)),\\ 
&\la_0 = \la(s_0) = \gamma(s_0\hat{v}_0+\hat{v}_0)<0,
\end{align}
satisfies $F((\phi_0,\bw_0),\la_0) = 0$.  This implies that the data set $(g_{\la_0}, \la_0^2\tau, \la_0^2\sigma, \la_0^2\rho, \la_0^2\bj)$
yields the solution $(\phi_0,\bw_0)$ to the conformal equations.

If \eqref{eq1:5mar13} holds, then both
 \begin{align}\label{eq4:5nov13}
&(\phi_i,\bw_i) = (\phi(s_i),\bw(s_i)) = \hat{v}_0 + s_i\hat{v}_0 + \psi(\hat{v}_0+s_i\hat{v}_0, \gamma(\hat{v}_0+s_i\hat{v}_0)),\\ 
&\la_i = \la(s_i) = \gamma(s_i\hat{v}_0+\hat{v}_0) = \la,
\end{align}
satisfy $F((\phi_i,\bw_i),\la) = 0$ for $i \in \{1,2\}$.  We showed in \cite{CMH12} that the operator 
$$f(s) = \psi(\hat{v}_0+s\hat{v}_0, \gamma(\hat{v}_0+s\hat{v}_0))= \cO(s^2) \quad \text{as} ~~ s\to 0.$$
The argument there followed by differentiating $f(s)$ with respect to $s$ and showing that $\dot{f}(0) = 0$,
which we can conclude from Proposition~\ref{prop1:6july12}.  This fact ensures that for $s$ in a small neighborhood
of $0$, the solutions $(\phi_1,\bw_1)$ and $(\phi_2,\bw_2)$ will be distinct.  This completes the proof of Theorem~\ref{thm1:31oct12}.

\end{proof}

\begin{proof}[Proof of Theorem~\ref{thm2:31oct12}]
If $\cM$ admits a metric with positive scalar curvature and a scalar flat metric $g_0$ with no conformal Killing fields,
we can apply Theorem~\ref{thm3:26oct12} to conclude that there exists a one-parameter family of metrics $g_{\la}$
through $g_0$ such that $R(g_{\la}) = \la$.  Moreover, since the set of metrics with no conformal Killing fields is an open
dense set, for $\la $ sufficiently small the metrics $g_{\la}$ will have no conformal Killing fields.  See \cite{BPS05} for
details.  We can therefore apply Theorem~\ref{thm1:31oct12} to conclude our result.
\end{proof}

\section{Conclusion}
\label{conc}
For a given closed, $3$-dimensional manifold $\cM$ that admits a metric with positive scalar curvature
we showed in Section~\ref{curvature} that there exists an analytic, one-parameter family
of metrics $g_{\la}$ that satisfies $R(g_{\la}) = \la$.
By adding the extra assumption that $\cM$ also admitted a metric
$g_0$ with zero scalar curvature and no conformal Killing fields, we were able to obtain an analytic family $g_{\la}$ through
$g_0$ with no conformal Killing fields that satisfied $R(g_{\la}) = \la$.  Using this one-parameter family and
given data $(\tau,\sigma,\rho,\bj)$ for the conformal equations, in Section~\ref{sec1:1nov12} we constructed a nonlinear operator 
\begin{align}\label{eq3:11mar13}
F((\phi,\bw),\la) = \left[\begin{array}{c}
- \Delta_{\la}\phi
+ \frac{1}{8}\la \phi
+ \frac{\la^4}{12}\tau^2\phi^5
- a_{\bw,\lambda}
  \phi^{-7}
- \frac{\la^2\kappa}{4}\rho\phi^{-3}\\
  \mathbb{L}_{\la}\bw
+ \frac{2\la^2}{3}D_{\la}\tau\phi^6
+ \la^2\kappa\bj \end{array}\right],
\end{align}
with 
$a_{\bw,\lambda} = \frac{1}{8}(\la^2\sigma+\mathcal{L}{\bw})_{ab}(\la^2\sigma+\mathcal{L}\bw)^{ab}$,
where solutions to $F((\phi,\bw),\la) =0$ satisfy the conformal equations with given data $(g_{\la}, \la^2\tau, \la^2\sigma, \la^2\rho,\la^2\bj)$. 
In Section~\ref{analytic}, we then showed that the nonlinear operator \eqref{eq3:11mar13} was analytic, and
in section~\ref{mainproof} we parametrized solutions to the
nonlinear problem
${F((\phi,\bw),\la) = 0}$ in a neighborhood of $((1,{\bf 0}),0)$. 

The analyticity of $F((\phi,\bw),\la)$ implied that our parametrized solution curve
\begin{align}\label{eq4:11mar13}
&(\phi(s),\bw(s)) = \hat{v}_0 + s\hat{v}_0 + \psi(\hat{v}_0+s\hat{v}_0, \gamma(\hat{v}_0+s\hat{v}_0)),\\ 
&\la(s) = \gamma(s\hat{v}_0+\hat{v}_0),
\end{align}
was analytic for $s\in (-\delta,\delta)$.  Using the analyticity of the solution curve~\eqref{eq4:11mar13} and
the preexisting far-from-CMC solution theory from
\cite{HNT08,HNT09}, we were then able to conclude that one of the following two must possibilities must hold:
\begin{enumerate}
\item There exists $\la_0 \in (0,\ee)$ and $s_1,s_2 \in (-\delta,\delta)$,  $s_1 \ne s_2$, such that \\
$(\phi(s_1),\bw(s_1)) \ne (\phi(s_2),\bw(s_2)), ~~\la(s_1) =\la(s_2) = \la_0$.\\
\item There exists $\la_0 \in (-\ee,0)$ and $s_0 \in (-\delta,\delta)$ such that $\la(s_0) = \la_0$.
\end{enumerate}
These two possibilities and Theorem~\ref{thm3:26oct12} implied the conclusions of Theorem~\ref{thm1:31oct12} and Theorem~\ref{thm2:31oct12}, the two main results of
our paper.  Namely, we concluded that either the positive Yamabe, far-from-CMC solutions to the constraint
equations must be non-unique, or that negative Yamabe, far-from-CMC solutions exist for this class of manifolds.  

While this article does not provide specific criteria for when positive Yamabe, far-from CMC solutions are non-unique and when
negative Yamabe, far-from-CMC solutions exist, it does show that one of these two possibilities must hold for this manifold class.  
Given that both of these aspects of the far-from-CMC solution theory are completely unresolved, these results further extend our understanding of the conformal method, and also provide some new analytical tools for obtaining additional results in this direction.  In an effort to push this line of research further,
we are currently working on a concrete way to distinguish between the cases above.  Our analysis lies in
whether the first non-zero term in the Taylor expansion of $\la(s)$ even or odd.  That is, if $\la(s)$ is of the form 
$$\la(s) = \frac{d^{(2i+1)}\la}{d\la^{2i+1}}(0)s^{2i+1}+\cO(s^{2i+2})\quad \text{for $i \ge 2$},$$
then negative Yamabe, far-from-CMC solutions exist for this class of metrics.  On the other hand, if
$$\la(s) = \frac{d^{(2i)}\la}{d\la^{2i}}(0)s^{2i}+\cO(s^{2i+1})\quad \text{for $i \ge 2$},$$
then the positive Yamabe, far-from-CMC solutions determined in \cite{HNT08,HNT09} are non-unique.  In order to determine
which form $\la(s)$ has, one needs to express $\frac{d^i}{d\la^i}\la(0)$ in terms of higher order derivatives of $F((\phi,\bw),\la)$ as in Proposition~\ref{prop1:1july12}.
This research is currently under way.

Another interesting oberservation that can be made from our results is that in Theorems~\ref{thm1:31oct12}-\ref{thm2:31oct12}, no distinction is made between
the near-CMC and far-from-CMC cases.  We simply don't assume that the near-CMC conditions hold.  Given that solutions to the conformal equations
are unique in the near-CMC case, we must have that solutions to the nonlinear problem $F((\phi,\bw),\la) = 0$ are unique in the event that the specified data
$\tau$ satisfies the near-CMC assumption.  Therefore, in the near-CMC case, the near-CMC solution theory forces us into the case that
negative Yamabe solutions exist.  As we have mentioned, the uniqueness and properties of the solution
curve $((\phi(s),\bw(s)),\la(s))$ depend in large part on the first non-zero coefficient in the Taylor expansion of $\la(s)$, which depends on the value of the operator $F((\phi,\bw),\la)$
and its derivatives with respect to $\phi, \bw$ and $\la$ at $((1,{\bf 0}),0)$.  As $\tau$ does not depend on these parameters, in this case
we would not expect that there should be a connection between the uniqueness properties of solutions to $F((\phi, \bw),\la) = 0$ and the prescribed data $\tau$. 
This strongly suggests that the solution properties of the nonlinear problem $F((\phi,\bw),\la)=0$ in a neighborhood of $((1,{\bf 0}),0)$  should be the same in the near-CMC 
and far-from-CMC cases.  This line of reasoning suggests that negative Yamabe, far-from-CMC solutions exist for $\tau \in C^1(\cM)$.  However, this is merely speculation and
a rigorous analysis of the solution curves of $F((\phi,\bw),\la)=0$ needs to be done as $\tau$ varies from from a function satisfying the near-CMC condition to
one not satisfying the near-CMC assumption.

\appendix
\section{Some Supporting Results}
\label{app}
\addtocontents{toc}{\protect\setcounter{tocdepth}{-1}}

\subsection{Sobolev and H\"{o}lder norms on $\cM$}\label{norm}

Fix a smooth background metric $g_{ab}$ and let $v^{a_1,\cdots,a_r}_{b_1,\cdots,b_s}$ be a tensor of type $r+s$. 
Then at a given point $x\in \cM$, we define its magnitude to be
\begin{align}\label{eq1:8july12}
|v| = (v^{a_1,\cdots,b_s}v_{a_1,\cdots,b_s})^{\frac12},
\end{align}
where the indices of $v$ are raised and lowered with respect to $g_{ab}$.  
We then define the Banach space of $k$-differentiable functions  
$C^{k}(\cM\times \mathbb{R})$ with norm $\|\cdot\|_k$
to be those functions $u$ satisfying
$$
\|u\|_k = \sum_{j=0}^k \sup_{x\in\cM}|D^ju| < \infty,
$$
where $D$ is the covariant derivative associated with $g_{ab}$.  
Similarly, we define the space $C^k(\cT^r_s \cM)$ of $k$-times
differentiable $(r,s)$ tensor fields to be those tensors $v$ satisfying $\|v\|_k < \infty$.

Given two points $x,y \in \cM$, we define $d(x,y)$ to be the geodesic distance between them.  
Let $\alpha \in (0,1)$.  Then we may define the $C^{0,\alpha}$ H\"{o}lder seminorm for a scalar-valued function $u$ to be
$$
[u]_{0,\alpha} = \sup_{x \ne y} \frac{|u(x)-u(y)|}{(d(x,y))^{\alpha}}.
$$
Using parallel transport, this definition can be extended to $(r,s)$-tensors
$v$ to obtain the $C^{k,\alpha}$ seminorm $[u]_{k,\alpha}$ \cite{Au82}.  
This leads us to the following definition of the $C^{k,\alpha}(\cM\times \mathbb{R})$ H\"{o}lder norm
$$
\|u\|_{k,\alpha} = \|u\|_k + [u]_{k,\alpha}
$$
for scalar-valued functions, and we may define the $C^{k,\alpha}(\cT^r_s \cM)$ H\"{o}lder
norm for $(r,s)$ tensors in a similar fashion. 

Finally, we also make use in the article of the Sobolev spaces $ W^{k,p}(\cM\times \mathbb{R})$ 
and
$W^{k,p}(\cT^r_s \cM)$ where we assume
$k \in \mathbb{N}$ and $p \ge 1$.  If $dV_g$ denotes
the volume form associated with $g_{ab}$, then the 
$L^p$ norm of an $(r,s)$ tensor is defined to be 
\begin{align}\label{eq3:8july12}
\|v\|_p = \left( \int_{\cM} |v|^p dV_g\right)^{\frac1p}.
\end{align}  
We can then define the Banach space $W^{k,p}(\cM\times \mathbb{R})$ (resp. $W^{k,p}(\cT^r_s\cM)$) to be those
functions (resp. $(r,s)$ tensors) $v$ satisfying
$$
\|v\|_{k,p} = \left( \sum_{j=0}^k\|D^jv\|^p_p \right)^{\frac1p} < \infty.
$$

The above norms are independent of the
background metric chosen.  Indeed, given any two metrics
$g_{ab}$ and $\hat{g}_{ab}$, one can show that the norms induced
by the two metrics are equivalent.  For example, if $D$ and $\hat{D}$ are
the derivatives induced by $g_{ab}$ and $\hat{g}_{ab}$ respectively,
then there exist constants $C_1$ and $C_2$ such that
$$
C_1 \|u\|_{k,\hat{g}} \le \|u\|_{k,g} \le C_2 \|u\|_{k,\hat{g}},
$$
where $\|\cdot\|_{k,g}$ denotes the $C^k(\cM)$ norm with respect to $g$.  This holds for
the $W^{k,p}$ and $C^{k,\alpha}$ norms as well.  We also note that the above norms
are related through the Sobolev embedding theorem.
In particular, the spaces $C^{k,\alpha}$ and $W^{l,p}$ are related in the sense that
if $n$ is the dimension of $\cM$ and $u \in W^{l,p}$ and
$$
k+\alpha < l-\frac np, 
$$
then $u \in C^{k,\alpha}$.   See \cite{Au82, Au98, EH96,RP68} for a complete
discussion of the Sobolev embedding Theorem, Banach spaces on manifolds, and the above norms, and also~\cite{HNT09} for a numbmer of related results specifically for the constraint equations.

\subsection{Banach Calculus and Taylor's Theorem }\label{sec1:10july12}

Here we give a brief overview of some basic tools from functional analysis.
The following results are
presented without proof and are taken from \cite{EZ86};
see also~\cite{StHo2011a}.
We begin with some notation.

Suppose that $X$ and $Y$ are Banach spaces and $U \subset X$ is a neighborhood of $0$.
For a given map $f:U\subset X \to Y $, we say that 
$$
f(x) = o(\|x\|), \hspace{2mm} x\to 0 \quad \text{iff} \hspace{3mm} r(x)/\|x\| \to 0 \hspace{2mm} \text{as} \hspace{2mm} x\to 0.
$$
We write $L(X,Y)$ for the class of continuous linear maps between the Banach spaces $X$ and $Y$.

\begin{definition}
Let $U \subset X$ be a neighborhood of $x$ and suppose that $X$ and $Y$ are Banach spaces.
\begin{itemize}

\item[(1)] We say that a map $f:U \to Y$ is {\bf F-differentiable} or {\bf Fr\'{e}chet differentiable} at $x$ iff
there exists a map $T\in L(X,Y)$ such that
$$
f(x+h) - f(x) = Th+o(\|h\|), \quad \text{as} \hspace{2mm} h \to 0,
$$
for all $h$ in some neighborhood of zero.
If it exists, $T$ is called the {\bf F-derivative} or {\bf Fr\'{e}chet derivative}
of $f$ and we define $f'(x) = T$.
If $f$ is Fr\'{e}chet differentiable for all $x\in U$ we say that $f$ is Fr\'{e}chet differentiable
in $U$.
Finally, we define the {\bf F-differential} at $x$ to be $df(x;h) = f'(x)h$.

\item[(2)] The map $f$ is {\bf G-differentiable} or {\bf G\^{a}teaux differentiable} at $x$ iff there exists a map
$T\in L(X,Y)$ such that
$$
f(x+tk)-f(x) = tTk +o(t), \quad \text{as} \hspace{2mm} t \to 0,
$$
for all $k$ with $\|k\|=1$ and all real numbers $t$ in some neighborhood of zero.
If it exists, $T$ is called 
the {\bf G-derivative} or {\bf G\^{a}teaux derivative} of $f$ and we define $f'(x)=T$.
If $f$ is G-differential for all $x\in U$ we say that $f$ is G\^{a}teaux differentiable in $U$.
The {\bf G-differential} at $x$ is defined to be
$d_Gf(x;h) = f'(x)h.$
\end{itemize}
\end{definition}

\begin{remark}\label{rem1:28aug12}
Clearly if an operator is F-differentiable, then it must also be \\
G-differentiable.
Moreover, if the G-derivative $f'$ exists in some neighborhood of $x$ and $f'$ is continuous at $x$, then $f'(x)$ is also the F-derivative.
This fact is quite useful for computing F-derivatives given that G-derivatives are easier to compute.
See \cite{EZ86,StHo2011a} for a complete discussion.
\end{remark}

\noindent
We view F-derivatives and G-derivatives as linear maps $f'(x) : U \to L(X,Y)$.
More generally, we may consider higher order derivatives maps of $f$.
For example, the map ${f''(x):U \to L(X,L(X,Y))}$ is a bilinear form.
We now state some basic properties of F-derivatives.
All of the following properties also hold for G-derivatives.

The Fr\'{e}chet derivative satisfies many of the usual properties that we are accustomed to by doing calculus in $\mathbb{R}^n$.
For example, we have the chain rule.

\begin{proposition}[Chain Rule]\label{chainRule}
Suppose that $X, Y$ and $Z$ are Banach spaces and assume that $f:U \subset X\to Y$ and $g:V\subset Y \to Z$
are differentiable on $U$ and $V$ resp.\ and that $f(U) \subset V$.
Then the function $H(x) = g \circ f $, i.e.\ $H(x) = g(f(x))$, is differentiable where
$$
H'(x) = g'(f(x))f'(x)
$$
where we write $g'(f(x))f'(x)$ for $g'(f(x))\circ f'(x)$.
\end{proposition}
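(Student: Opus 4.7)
The plan is to prove the chain rule for Fréchet derivatives by directly expanding the composition using the defining remainder estimates of each derivative, and then showing that the combined remainder is $o(\|h\|)$. First I would unpack what it means for $f$ and $g$ to be F-differentiable: there exist continuous linear maps $f'(x) \in L(X,Y)$ and $g'(f(x)) \in L(Y,Z)$ together with remainder maps $r_f$ and $r_g$ satisfying
\begin{align*}
f(x+h) - f(x) &= f'(x)h + r_f(h), \qquad r_f(h) = o(\|h\|) \text{ as } h \to 0,\\
g(f(x)+k) - g(f(x)) &= g'(f(x))k + r_g(k), \qquad r_g(k) = o(\|k\|) \text{ as } k \to 0,
\end{align*}
both valid for sufficiently small $h$ and $k$.

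Next I would compute the difference $H(x+h) - H(x) = g(f(x+h)) - g(f(x))$ by setting $k = k(h) := f(x+h) - f(x) = f'(x)h + r_f(h)$, and substituting into the expansion for $g$. This yields
\begin{equation*}
H(x+h) - H(x) = g'(f(x))\bigl[f'(x)h + r_f(h)\bigr] + r_g(k(h)) = g'(f(x))f'(x)h + R(h),
\end{equation*}
where the combined remainder is $R(h) = g'(f(x))\,r_f(h) + r_g(k(h))$. Since $g'(f(x))f'(x)$ is a composition of continuous linear maps, it lies in $L(X,Z)$, so it suffices to verify that $R(h) = o(\|h\|)$ as $h \to 0$.

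The key technical step, which I expect to be the main (modest) obstacle, is controlling the second remainder $r_g(k(h))$, because its smallness is expressed in terms of $\|k\|$ rather than $\|h\|$. For the first term, continuity of $g'(f(x))$ gives $\|g'(f(x))\,r_f(h)\| \leq \|g'(f(x))\|\,\|r_f(h)\| = o(\|h\|)$. For the second, I would first observe that, for $h$ small enough so that $\|r_f(h)\| \leq \|h\|$,
\begin{equation*}
\|k(h)\| \leq \|f'(x)\|\,\|h\| + \|r_f(h)\| \leq \bigl(\|f'(x)\| + 1\bigr)\|h\|,
\end{equation*}
so in particular $k(h) \to 0$ as $h \to 0$, which legitimates the use of the remainder estimate for $g$. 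Then, writing
\begin{equation*}
\frac{\|r_g(k(h))\|}{\|h\|} \leq \frac{\|r_g(k(h))\|}{\|k(h)\|}\cdot\frac{\|k(h)\|}{\|h\|} \leq \bigl(\|f'(x)\| + 1\bigr)\frac{\|r_g(k(h))\|}{\|k(h)\|}
\end{equation*}
(with the convention that the quotient is $0$ when $k(h) = 0$), and using $r_g(k) = o(\|k\|)$ as $k \to 0$, the right-hand side tends to zero. Combining these two estimates yields $R(h) = o(\|h\|)$, which by the definition of the F-derivative gives $H'(x) = g'(f(x))\circ f'(x)$, completing the proof. Differentiability at every $x \in U$ follows by applying the same argument pointwise.
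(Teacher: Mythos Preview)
Your proof is correct and is the standard argument for the Fr\'echet chain rule. Note, however, that the paper does not actually supply a proof of this proposition: it is stated without proof as part of a collection of background results taken from Zeidler's text, so there is no ``paper's own proof'' to compare against. Your direct remainder-expansion approach is exactly what one finds in standard references, and every step (the linearity of $g'(f(x))\circ f'(x)$, the bound $\|k(h)\|\leq(\|f'(x)\|+1)\|h\|$ for small $h$, and the factorization of $\|r_g(k(h))\|/\|h\|$ through $\|k(h)\|$) is handled cleanly.
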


Given an operator $f:X\times Y \to Z$, we can also consider the partial derivative
of $f$ with respect to either $x$ or $y$.
If we fix the variable $y$ and define
$g(x) = f(x,y): X \to Z$ and $g(x)$ is Fr\'{e}chet differentiable at $x$, then the
{\bf partial derivative} of $f$ with respect to $x$ at $(x,y)$ is $f_x(x,y) = g'(x)$.
We can a make a similar definition for $f_y(x,y)$.
Finally, we observe that
we can express the F-differential of $f'(x,y)$ in terms of the partials by using
the following formula:
\begin{align}\label{eq1:7july12}
f'(x,y)(h,k) = f_x(x,y)h + f_y(x,y)k.
\end{align}
We have the following relationship between the partial derivatives and the
Fr\'{e}chet \\derivative.
\begin{proposition}
Suppose that $f:X \times Y \to Z$ is F-differentiable at $(x,y)$.
Then the partial F-derivatives
$f_x$ and $f_y$ exist at $(x,y)$ and they satisfy \eqref{eq1:7july12}.
Moreover, if 
$f_x$ and $f_y$ both exist and are continuous in a neighborhood of $(x,y)$ then $f'(x,y)$ exists as an
F-derivative and \eqref{eq1:7july12} holds.
\end{proposition}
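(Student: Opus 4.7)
The plan is to handle the two parts of the proposition separately. The forward implication follows almost directly from the definition of the F-derivative, while the converse is the substantive content and requires a telescoping decomposition together with a Banach-space mean value theorem.

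For the forward direction, suppose $f$ is F-differentiable at $(x,y)$ with derivative $T = f'(x,y) \in L(X\times Y, Z)$, so that
\begin{equation*}
f(x+h, y+k) - f(x,y) = T(h,k) + r(h,k), \qquad \|r(h,k)\|/\|(h,k)\| \to 0.
\end{equation*}
Setting $k = 0$ yields $f(x+h, y) - f(x,y) = T(h,0) + r(h,0)$ with $\|r(h,0)\|/\|h\| \to 0$, so the partial F-derivative $f_x(x,y)$ exists and is the continuous linear map $h \mapsto T(h,0)$. The argument for $f_y$ is symmetric. Linearity of $T$ then gives $T(h,k) = T(h,0) + T(0,k) = f_x(x,y)h + f_y(x,y)k$, which is \eqref{eq1:7july12}.

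For the converse, assume $f_x$ and $f_y$ exist on a neighborhood $U$ of $(x,y)$ and are continuous at $(x,y)$. The idea is to telescope
\begin{equation*}
f(x+h,y+k) - f(x,y) = \underbrace{[f(x+h,y+k) - f(x,y+k)]}_{A} + \underbrace{[f(x,y+k) - f(x,y)]}_{B},
\end{equation*}
and estimate each piece. For $B$, the existence of $f_y$ at $(x,y)$ gives directly $B = f_y(x,y)k + o(\|k\|) = f_y(x,y)k + o(\|(h,k)\|)$. For $A$, since $(x+th, y+k) \in U$ for all $t \in [0,1]$ provided $\|(h,k)\|$ is small enough, the map $g(t) = f(x+th, y+k)$ is differentiable on $[0,1]$ with $g'(t) = f_x(x+th, y+k)h$. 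Continuity of $f_x$ in a neighborhood of $(x,y)$ ensures that $t \mapsto f_x(x+th, y+k)$ is continuous, hence the Banach-valued fundamental theorem of calculus applies:
\begin{equation*}
A = \int_0^1 f_x(x+th, y+k)h \, dt = f_x(x,y)h + \int_0^1 \bigl[f_x(x+th, y+k) - f_x(x,y)\bigr]h \, dt.
\end{equation*}
The remainder integral is bounded in norm by $\|h\| \sup_{t\in[0,1]} \|f_x(x+th,y+k) - f_x(x,y)\|$, and continuity of $f_x$ at $(x,y)$ forces this supremum to vanish as $(h,k) \to 0$. Thus $A = f_x(x,y)h + o(\|(h,k)\|)$. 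Adding the estimates for $A$ and $B$ yields $f(x+h,y+k) - f(x,y) = f_x(x,y)h + f_y(x,y)k + o(\|(h,k)\|)$, and since $T(h,k) := f_x(x,y)h + f_y(x,y)k$ lies in $L(X\times Y, Z)$, the operator $f$ is F-differentiable at $(x,y)$ with $f'(x,y) = T$, which is again \eqref{eq1:7july12}.

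The main technical obstacle is the invocation of the Banach-space fundamental theorem of calculus in the estimate of $A$: this requires the integrand $t \mapsto f_x(x+th, y+k) h$ to be a continuous $Z$-valued function on $[0,1]$, which is why continuity of $f_x$ in a neighborhood (not just existence at $(x,y)$) is essential. Without this hypothesis one cannot control the increment $A$ uniformly along the segment, and the conclusion genuinely fails; the rest of the proof consists of routine linearity and $o(\cdot)$-bookkeeping once this key estimate is in place.
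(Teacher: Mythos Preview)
The paper does not actually prove this proposition: it is stated in the appendix among ``results presented without proof and taken from \cite{EZ86}.'' Your argument is correct and is essentially the standard proof one finds in Zeidler and similar references, so there is nothing to compare against in the paper itself.
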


\subsubsection{Taylor's Theorem}

As we have mentioned, the $n$-th order Fr\'{e}chet derivative of a given operator $f:X\to Y$ between Banach
spaces in a $n$-multilinear operator.  For a given $x_0 \in X$, define
\begin{align}
f^{(n)}(x_0)h^n = f^{(n)}(x_0)&(\underbrace{h,\cdots,h})\\
& \quad \text{$n$ times} \nonumber
\end{align}

Using this notation, we can state the following generalization of Taylor's Theorem for operators between Banach
spaces.  See \cite{EZ86,StHo2011a} for a proof and more details.

\begin{theorem}\label{thm1:29apr13}
Let $X$ and $Y$ be Banach spaces.  Suppose that $f:U \subset X \to Y$ is defined on an open, convex neighborhood $U$ of $x_0 \in X$.
Then if $f'(x), \cdots f^{(n)}(x)$ exist for $x \in U$, then
\begin{align}
f(x_0+h) = \sum_{n=1}^N \frac{1}{n!} f^{(n)}(x_0)h^n + R_{N+1}(x_0),
\end{align}
where
\begin{align}
\|R_{N+1}(x_0)\|_Y \le \frac{1}{(N+1)!}\sup_{0<\tau<1}\|f^{(N+1)}(x_0+\tau h)h^{N+1}\|_Y.
\end{align}
\end{theorem}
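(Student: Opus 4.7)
The plan is to reduce the Banach-valued Taylor theorem to the classical Taylor expansion for $Y$-valued functions of a single real variable, and then invoke the chain rule to translate back to the Fr\'echet-differentiable setting. Since $U$ is open and convex, for every $h$ with $x_0+h\in U$ the whole segment $\{x_0+th : t\in[0,1]\}$ is contained in $U$, so the auxiliary map $g:[0,1]\to Y$ defined by $g(t)=f(x_0+th)$ is well-defined and takes values in $Y$.

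The next step is to establish by induction, using the chain rule of Proposition~\ref{chainRule}, that
\begin{align*}
g^{(k)}(t) \;=\; f^{(k)}(x_0+th)\,h^k
\end{align*}
for $k=0,1,\dots,N+1$, where $f^{(k)}(x_0+th)\,h^k$ denotes the $k$-linear operator $f^{(k)}(x_0+th)$ evaluated on the $k$-tuple $(h,\dots,h)$. The symmetry of the higher Fr\'echet derivatives (a standard result for iterated Fr\'echet differentiation) is what makes this formula clean rather than a combinatorial sum. Applying the integral-remainder Taylor formula for the $Y$-valued function $g$ on $[0,1]$ then yields
\begin{align*}
g(1) \;=\; \sum_{k=0}^{N}\frac{g^{(k)}(0)}{k!} \;+\; \frac{1}{N!}\int_0^1(1-t)^N\, g^{(N+1)}(t)\,dt,
\end{align*}
which after substituting the formula for $g^{(k)}$ gives the desired Taylor expansion of $f$ about $x_0$ with
\begin{align*}
R_{N+1}(x_0) \;=\; \frac{1}{N!}\int_0^1(1-t)^N\, f^{(N+1)}(x_0+th)\,h^{N+1}\,dt.
\end{align*}

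The norm bound then follows immediately from the triangle inequality for the Bochner (or Riemann) integral in $Y$ together with the elementary identity $\int_0^1(1-t)^N\,dt = 1/(N+1)$, giving
\begin{align*}
\|R_{N+1}(x_0)\|_Y \;\le\; \frac{1}{N!}\int_0^1(1-t)^N\,\|f^{(N+1)}(x_0+th)\,h^{N+1}\|_Y\,dt \;\le\; \frac{1}{(N+1)!}\sup_{0<\tau<1}\|f^{(N+1)}(x_0+\tau h)\,h^{N+1}\|_Y.
\end{align*}
The main obstacle in this program is the inductive verification of the identity $g^{(k)}(t)=f^{(k)}(x_0+th)\,h^k$: one must carefully track how the $k$-linear operator $f^{(k)}$ interacts with differentiation in the one-dimensional parameter $t$, and show that the chain rule collapses the contribution at each step onto one new copy of $h$. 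A secondary technicality is the $Y$-valued version of the integral-remainder Taylor formula on $[0,1]$, which can be handled either by developing the vector-valued fundamental theorem of calculus directly or, more economically, by applying any bounded linear functional $\ell\in Y^{*}$ to reduce to the scalar case and then recovering the $Y$-norm bound via the Hahn--Banach theorem.
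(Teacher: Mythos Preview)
Your proof is correct and follows the standard textbook argument: reduce to the $Y$-valued function $g(t)=f(x_0+th)$ of one real variable, compute $g^{(k)}(t)=f^{(k)}(x_0+th)h^k$ via the chain rule, apply the one-dimensional Taylor formula with integral remainder, and bound the integral. The paper itself does not prove this theorem; it is stated in the appendix as a supporting result with the remark ``See \cite{EZ86,StHo2011a} for a proof and more details,'' so there is no in-paper proof to compare against. Your argument is essentially the one found in Zeidler's text (the cited reference \cite{EZ86}), so you have supplied exactly what the paper defers to the literature.
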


\subsection{Additional Bifurcation Theory}

In this section we present without proof, some additional results from
\cite{HK04} which are relevant to our discussion.  Proposition~\ref{prop1:6july12} presents some 
useful properties of the maps $\Phi(v,\la)$, $\psi(v,\la)$ and $\gamma(v)$
defined in the \eqref{eq6:2july12}, \eqref{eq9:6july12} and \eqref{eq2:26apr13} in Section~\ref{LSchmidt}.

\begin{proposition}\label{prop1:6july12}
Let the assumptions of Theorem~\ref{thm1:29apr12} hold and let
the operators $\Phi(v,\la)$, $\psi(v,\la)$ and $\gamma(v)$ be defined as in 
\eqref{eq6:2july12}, \eqref{eq9:6july12} and \eqref{eq2:26apr13} and let $\la_0$ and ${x_0 = v_0+w_0}$
be as in the previous discussion.  Then
\begin{align}\label{eq10:6july12}
D_v\Phi(v_0,\la_0) = 0, \quad D_v\psi(v_0,\la_0) = 0, \quad \text{and} \quad D_v\gamma(v_0) = 0,
\end{align}
and each of these operators has the same order of differentiability as $F(x,\la)$.
\end{proposition}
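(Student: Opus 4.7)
The plan is to exploit the three defining relations in succession: $G(v,\psi(v,\la),\la)=0$ defines $\psi$, $\Phi(v,\la)=QF(v+\psi(v,\la),\la)$ defines $\Phi$, and $\Phi(v,\gamma(v))=0$ defines $\gamma$. Implicit differentiation of each relation at the distinguished point, combined with the structural facts that $X_1=\ker D_xF(x_0,\la_0)$ and that $(I-Q)D_xF(x_0,\la_0)|_{X_2}:X_2\to Y_1$ is an isomorphism, will force each of the three derivatives to vanish.

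First I would handle $D_v\psi(v_0,\la_0)$. Differentiating $G(v,\psi(v,\la),\la)=(I-Q)F(v+\psi(v,\la),\la)=0$ with respect to $v$ at $(v_0,\la_0)$ yields
\begin{equation*}
(I-Q)D_xF(x_0,\la_0)\bigl|_{X_1}+(I-Q)D_xF(x_0,\la_0)\bigl|_{X_2}\,D_v\psi(v_0,\la_0)=0.
\end{equation*}
The first term is zero because $X_1=\ker D_xF(x_0,\la_0)$, and the operator $(I-Q)D_xF(x_0,\la_0)|_{X_2}$ is bijective onto $Y_1$, so $D_v\psi(v_0,\la_0)=0$. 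Next, applying the chain rule to $\Phi(v,\la)=QF(v+\psi(v,\la),\la)$ gives
\begin{equation*}
D_v\Phi(v_0,\la_0)=QD_xF(x_0,\la_0)\bigl(I_{X_1}+D_v\psi(v_0,\la_0)\bigr).
\end{equation*}
Since $D_xF(x_0,\la_0)$ annihilates $X_1$ and $D_v\psi(v_0,\la_0)=0$, this gives $D_v\Phi(v_0,\la_0)=0$.

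For $\gamma$, I would differentiate $\Phi(v,\gamma(v))=0$ at $v=v_0$ to get
\begin{equation*}
D_v\Phi(v_0,\la_0)+D_\la\Phi(v_0,\la_0)\,D_v\gamma(v_0)=0.
\end{equation*}
The first summand vanishes by the preceding paragraph, so it remains to observe that $D_\la\Phi(v_0,\la_0)$ is injective. A quick computation shows $D_\la\Phi(v_0,\la_0)=QD_xF(x_0,\la_0)D_\la\psi(v_0,\la_0)+QD_\la F(x_0,\la_0)$; the first term lies in $Q(R(D_xF(x_0,\la_0)))=Q(Y_1)=\{0\}$, so $D_\la\Phi(v_0,\la_0)=QD_\la F(x_0,\la_0)$. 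The transversality hypothesis $D_\la F(x_0,\la_0)\notin R(D_xF(x_0,\la_0))$ (used when constructing $\gamma$ via Theorem~\ref{thm1:29apr12}) says precisely that this element of $Y_2$ is nonzero, hence an isomorphism $\mathbb{R}\to Y_2$. Thus $D_v\gamma(v_0)=0$.

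The regularity claim follows directly from the parametric statement of the Implicit Function Theorem~\ref{thm1:25apr13}: the implicit functions $\psi$ and $\gamma$ inherit the $C^m$ (resp.\ analytic) regularity of the operator whose zero set they parametrize, and $\Phi$ is then a composition of $F$ with these maps and the continuous linear projection $Q$, preserving the same regularity. The only delicate point in the whole argument is the identification $D_\la\Phi(v_0,\la_0)=QD_\la F(x_0,\la_0)$, where one must carefully track how the term involving $D_\la\psi(v_0,\la_0)$ is killed by $Q$; this is the one place where the decomposition $Y=Y_1\oplus Y_2$ and the range/kernel duality must be used explicitly, and I expect this to be the only step that needs care.
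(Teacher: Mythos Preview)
Your proof is correct and follows the standard implicit-differentiation argument for Liapunov--Schmidt reduction. Note that the paper itself does not give a proof of this proposition: it is stated in the appendix explicitly ``without proof'' as a quotation from Kielh\"ofer's text \cite{HK04}, so there is no in-paper argument to compare against. Your derivation is exactly the one found in that reference (and in most treatments of the subject): differentiate the defining identity for $\psi$, use that $X_1=\ker D_xF(x_0,\la_0)$ and the bijectivity of $(I-Q)D_xF(x_0,\la_0)|_{X_2}$ to get $D_v\psi=0$; feed this into the chain rule for $\Phi$; then use $Q\,R(D_xF)=0$ and the transversality hypothesis \eqref{eq3:29apr12} to conclude $D_\la\Phi(v_0,\la_0)=QD_\la F(x_0,\la_0)\neq 0$, forcing $D_v\gamma=0$. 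The regularity statement is indeed immediate from Theorem~\ref{thm1:25apr13}(b)--(c).
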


Once we've obtained a unique solution curve $(x(s),\la(s))$ through $(x_0,\la_0)$, we
analyze $\ddot{\la}(0)$ (where $\dot{} = \frac{d}{ds}$) to determine additional information
about the solution curve.  In particular, we can determine whether
or not a {\bf saddle node bifurcation} or fold occurs at $(x_0,\la_0)$.  This type of bifurcation
occurs when the solution curve $\{x(s),\la(s)\}$ has a turning point at $(x_0,\la_0)$.  The next proposition, also taken
from \cite{HK04}, provides us with a method to determine information about $\ddot{\la}(0)$. 

\begin{proposition}\label{prop1:1july12}
Let the assumptions of Theorem~\ref{thm1:29apr12} be in effect.   Additionally assume that 
$ker(D_XF(x_0,\la_0))$ is spanned by $\hat{v}_0$.  Then 
\begin{align}\label{eq1:1july12}
&\left.\frac{d}{ds}F(x(s),\la(s))\right|_{s=0}= \\
& D_xF(x_0,\la_0)\dot{x}(0)+D_{\la}F(x_0,\la_0)\dot{\la}(0) =D_xF(x_0,\la_0)\hat{v}_0= 0 \nonumber\\
&\left.\frac{d^2}{ds^2}F(x(s),\la(s))\right|_{s=0}= \label{eq1:17july12} \\
&D^2_{xx}F(x_0,\la_0)[\hat{v}_0,\hat{v}_0]+D_xF(x_0,\la_0)\ddot{x}(0)+D_{\la}F(x_0,\la_0)\ddot{\la}(0) = 0.\nonumber
\end{align}
In particular, an application of the projection operator
$Q$ defined in \eqref{eq1:27june12} to \eqref{eq1:17july12} yields
\begin{align}\label{eq3:5oct12}
QD^2_{xx}F(x_0,\la_0)[\hat{v}_0,\hat{v}_0]+QD_{\la}F(x_0,\la_0)\ddot{\la}(0) = 0.
\end{align}
This implies that if $D_{\la}F(x_0,\la_0)\notin R(D_xF(x_0,\la_0))$ and 
$$D^2_{xx}F(x_0,\la_0)[\hat{v}_0,\hat{v}_0] \notin R(D_xF(x_0,\la_0)),$$
then $\ddot{\la}(0) \ne 0$. 
\end{proposition}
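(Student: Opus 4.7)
The plan is to differentiate the identity $F(x(s),\la(s)) \equiv 0$ twice along the solution curve and combine the chain-rule expansions with the vanishing-derivative properties of the Liapunov--Schmidt parametrization recorded in Proposition~\ref{prop1:6july12}. The argument is essentially a direct computation; no nontrivial analytic input beyond that proposition is required.

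First, I would apply the chain rule once to $F(x(s),\la(s))=0$ to obtain $D_xF(x(s),\la(s))\dot{x}(s) + D_\la F(x(s),\la(s))\dot{\la}(s) = 0$, and evaluate at $s=0$ to get the first displayed identity. To reduce this further to $D_xF(x_0,\la_0)\hat{v}_0 = 0$, I would differentiate the explicit parametrization $x(s) = v_0 + s\hat{v}_0 + \psi(v_0+s\hat{v}_0,\gamma(v_0+s\hat{v}_0))$ and $\la(s)=\gamma(v_0+s\hat{v}_0)$ from \eqref{eq4:2nov12}, and then invoke $D_v\psi(v_0,\la_0)=0$ and $D_v\gamma(v_0)=0$ from Proposition~\ref{prop1:6july12} to conclude $\dot{x}(0)=\hat{v}_0$ and $\dot{\la}(0)=0$. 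The identity then collapses to the statement that $\hat{v}_0 \in \text{ker}(D_xF(x_0,\la_0))$, which holds by hypothesis.

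Second, I would differentiate once more. The product and chain rules produce a five-term expansion in which the mixed $D^2_{x\la}F[\dot{x}]\dot{\la}$ and $D^2_{\la\la}F\dot{\la}^2$ contributions drop out at $s=0$ because $\dot{\la}(0)=0$; the surviving terms give the second displayed identity. Applying the projection $Q$ onto $Y_2 = \text{ker}(D_xF(x_0,\la_0)^*)$ annihilates $QD_xF(x_0,\la_0)\ddot{x}(0)$, since $R(D_xF(x_0,\la_0))=Y_1 = \text{ker}\,Q$, yielding the third identity. Because $\dim Y_2 = 1$ by the Fredholm hypothesis \eqref{eq7:2july12} embedded in Theorem~\ref{thm1:29apr12}, this is effectively a scalar equation after pairing with a basis vector of $Y_2$, and the two hypotheses $D_\la F(x_0,\la_0)\notin R(D_xF(x_0,\la_0))$ and $D^2_{xx}F(x_0,\la_0)[\hat{v}_0,\hat{v}_0]\notin R(D_xF(x_0,\la_0))$ assert precisely that the coefficients $QD_\la F(x_0,\la_0)$ and $QD^2_{xx}F(x_0,\la_0)[\hat{v}_0,\hat{v}_0]$ are both nonzero. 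Solving the resulting scalar relation for $\ddot{\la}(0)$ forces $\ddot{\la}(0)\ne 0$.

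The only point that requires any care is securing $\dot{\la}(0)=0$ in the first step, since this vanishing is what eliminates the mixed second-derivative terms and produces the clean form of the second identity; everything else is routine Banach calculus. The obstacle, to the extent there is one, lies entirely in Proposition~\ref{prop1:6july12}, without whose vanishing-derivative conclusions the final expression for $\ddot{\la}(0)$ would carry additional coupling terms and the stated criterion for $\ddot{\la}(0)\ne 0$ would no longer be sharp.
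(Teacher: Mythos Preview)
Your proposal is correct and is precisely the standard argument: differentiate the identity $F(x(s),\la(s))\equiv 0$ twice, use Proposition~\ref{prop1:6july12} to obtain $\dot{x}(0)=\hat{v}_0$ and $\dot{\la}(0)=0$, and then project with $Q$ to kill the $D_xF(x_0,\la_0)\ddot{x}(0)$ term. The paper itself does not supply a proof of this proposition; it is presented in the appendix as a result taken without proof from \cite{HK04}, so there is no paper-specific argument to compare against, and your computation is exactly what one finds in that reference.
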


The significance of Proposition~\ref{prop1:1july12} is that it gives explicit conditions that allow us to determine whether or not $\ddot{\la}(0)$ is nonzero.
Heuristically, the fact that $\ddot{\la}(0) \ne 0$ means that $\la(s)$ has a turning point at $s=0$.  This means
that the graph of $\{x(s),\la(s)\}$ looks like a parabola and that a ``fold" or saddle node
bifurcation occurs at $s=0$ (cf. \cite{HK04}).

\subsection{Local Representation of Conformal Equations}

Here we determine the local representation of the Hamiltionian and momentum constraints in the one-parameter family \eqref{eq1:26oct12} analyzed in
this paper.  Throughout this discussion, suppose that $g_{\la} \subset \cA^{s,p}$ $(s > 2+3/p)$ is the one-parameter family of metrics, analytic in $\la$, that is defined in Theorem~\ref{thm1:31oct12}.  
Let $\Delta_{\la}$, $\cL_{\la}$ and $D_{\la}$ be the associated 
Laplace-Beltrami, conformal Killing, and covariant derivative operators.  We begin with the following proposition, which 
describes the local representation of the one-parameter family of metrics $g_{\la} = g(x,\la)$.

\begin{proposition}\label{prop1:24may13}
The components $g_{ab}(x,\la)$ of the one-parameter family $g(x,\la)$ are analytic functions in the variable $\la$.  
Moreover, the Christoffel symbols and the coordinate derivatives of the Christoffel symbols defined by this metric are 
analytic functions in the variable $\la$.
\end{proposition}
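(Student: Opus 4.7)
The plan is to reduce everything to the analyticity of $\la\mapsto g_\la = g_0 + \psi(\la)$ into $\cS_2^{s,p}$ established in Theorem~\ref{thm3:26oct12}, and then to transfer that analyticity first to the coordinate components and their $x$-derivatives, and finally to the Christoffel symbols and their $x$-derivatives via the arithmetic of analytic functions.

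First I would invoke Remark~\ref{rem1:26feb13}. Since $s>3+\tfrac{3}{p}$, the Sobolev embedding $W^{s,p}\hookrightarrow C^{k}$ holds for every integer $k$ with $k\le s-\tfrac{3}{p}$, and in particular for $k=2$. The estimate $\|\cdot\|_{C^k}\le C\|\cdot\|_{W^{s,p}}$ together with Definition~\ref{def1:25apr13} then lifts the absolutely convergent $W^{s,p}$-series
\begin{equation}
\psi(\la)=\sum_{i=0}^{\infty}\tfrac{1}{i!}\,D_\la^{i}\psi(0)\,\la^{i}
\end{equation}
to an absolutely convergent $C^{k}(T^{0}_{2}\cM)$-series. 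In a fixed chart this is precisely the assertion that, for each $x\in\cM$ and each multi-index $|\alpha|\le k$, the functions $\partial^{\alpha}_{x}g_{ab}(x,\la)$ are given by absolutely convergent power series in $\la$, hence are analytic in $\la$ on a small interval $(-\delta,\delta)$.

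Next I would handle the inverse metric. The determinant $\det g(x,\la)$ is a polynomial in the entries $g_{ab}(x,\la)$, so it is analytic in $\la$; since $g_{0}$ is a Riemannian metric and $\cM$ is compact, continuity provides $\delta_{1}>0$ and $c>0$ with $\det g(x,\la)\ge c$ uniformly on $\cM\times(-\delta_{1},\delta_{1})$. Cramer's rule then expresses $g^{ab}(x,\la)$ as a polynomial in the $g_{cd}(x,\la)$ divided by $\det g(x,\la)$, so $g^{ab}(x,\la)$ is analytic in $\la$ on $(-\delta_{1},\delta_{1})$, the ring of analytic functions being closed under sums, products, and reciprocation away from zeros of the denominator. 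Assembling the Christoffel symbols
\begin{equation}
\Gamma^{c}_{ab}(x,\la)=\tfrac12\,g^{cd}(x,\la)\bigl(\partial_{a}g_{bd}+\partial_{b}g_{ad}-\partial_{d}g_{ab}\bigr)(x,\la)
\end{equation}
exhibits them as polynomials in objects already known to be analytic in $\la$, hence analytic in $\la$.

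For the coordinate derivatives of the Christoffel symbols I would differentiate once in $x$, using $\partial_{m}g^{cd}=-g^{ce}g^{df}\partial_{m}g_{ef}$, to obtain
\begin{equation}
\partial_{m}\Gamma^{c}_{ab}=\partial_{m}g^{cd}\bigl(\partial_{a}g_{bd}+\partial_{b}g_{ad}-\partial_{d}g_{ab}\bigr)+g^{cd}\bigl(\partial_{m}\partial_{a}g_{bd}+\partial_{m}\partial_{b}g_{ad}-\partial_{m}\partial_{d}g_{ab}\bigr).
\end{equation}
Each factor on the right is analytic in $\la$ by the first step (the second-order $x$-derivatives are legitimate because we embedded into $C^{2}$), and again sums and products of analytic functions are analytic. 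The main technical obstacle is bookkeeping of two standard facts: (i) that the $W^{s,p}\hookrightarrow C^{2}$ embedding really transports the absolutely convergent $W^{s,p}$-series for $\psi(\la)$ term by term into an absolutely convergent $C^{2}$-series; and (ii) that reciprocation of a pointwise analytic, uniformly non-vanishing scalar curve preserves analyticity. Both are classical, so no essentially new analytic input is required beyond the results already proved earlier in the paper.
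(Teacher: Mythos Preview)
Your argument is correct and follows essentially the same route as the paper for the first half: use the analyticity of $\la\mapsto\psi(\la)$ into $W^{s,p}$ together with the Sobolev embedding $W^{s,p}\hookrightarrow C^2$ to conclude that $g_{ab}(x,\la)$ and its first and second spatial derivatives are given by absolutely convergent power series in $\la$. The paper stops there, leaving the passage to the Christoffel symbols implicit; you go further and make that step explicit by treating the inverse metric via Cramer's rule (with the uniform lower bound on $\det g$ coming from compactness of $\cM$ and continuity in $\la$) and then assembling $\Gamma^{c}_{ab}$ and $\partial_m\Gamma^{c}_{ab}$ as polynomial expressions in objects already shown analytic. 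This is a genuine improvement in completeness over the paper's proof, which never mentions $g^{ab}$ or the reciprocal of the determinant. One cosmetic slip: your displayed expression for $\partial_m\Gamma^{c}_{ab}$ drops the factor $\tfrac12$, but this is immaterial to the argument.
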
  
\begin{proof}
Because the one-parameter family $g(x,\la)$ is analytic in $\la$, we have that
\begin{align}\label{eq1:24may13}
g(x,\la) = \sum_{k=1}^{\infty}\frac{1}{k!} D_{\la}^kg(x,0)&\underbrace{(\la,\cdots,\la)}.\\
&\quad \text{$k$ times}\nonumber
\end{align}
The above expression is an infinite sum of power operators as in Definition~\ref{def2:25apr13}, where for each $k$,
\begin{align}
D_{\la}^kg(x,0): &\underbrace{\mathbb{R} \times \cdots \times \mathbb{R}} \to W^{s,p}(T^0_2\cM)\\
&\quad \text{$k$ times}\nonumber
\end{align}
is a $k$-multilinear operator from $\mathbb{R}$ to $W^{s,p}(T^0_2\cM)$.  Given that $\la \in \mathbb{R}$ and each
$D_{\la}^kg(x,0)$ is a multilinear operator, the series \eqref{eq1:24may13} can be rewritten as
\begin{align}
g(x,\la) &= \sum_{k=1}^{\infty}\frac{1}{k!} D_{\la}^kg(x,0)(\la,\cdots, \la) \\
             &= \sum_{k=1}^{\infty}\frac{1}{k!} D_{\la}^kg(x,0)(1,\cdots,1) \la^k. \nonumber
\end{align}
The above series converges in the sense of Definition~\ref{def2:25apr13}, and therefore 
converges to $g(x,\la)$ in $W^{s,p}(T^0_2\cM)$.
Furthermore, the local coordinates for $g(x,\la)$ are analytic, where
\begin{align}\label{eq2:24may13}
g_{ab}(x,\la) = g(x,\la)\left(\frac{\partial}{\partial x^a},\frac{\partial}{\partial x^b}\right) =  \sum_{k=1}^{\infty}\frac{1}{k!} D_{\la}^kg(x,0)(1,\cdots,1)\left(\frac{\partial}{\partial x^a},\frac{\partial}{\partial x^b}\right) \la^k,
\end{align}
converges in $W^{s,p}$.  Finally, because $s> 2+3/p$, the series 
$$
\sum_{k=0}^{\infty}\frac{1}{k!} \| D^k_{\la}g(x,0)\|_{C^2} |\la|^k
$$
converges, where $\| D^k_{\la}g(x,0)\|_{C^2}$ is the operator norm \eqref{eq1:20may13} induced by the norm on $\mathbb{R}$ and the norm
on $C^2(T^0_2\cM)$.  This implies that
all first and second derivatives of $g_{ab}(x,\la)$ will be analytic with respect to $\la$ and will have power series representations
that are obtained by differentiating the series \eqref{eq2:24may13} inside the sum.  
\end{proof}

We can now present the following Proposition concerning the local formulation of the family Hamiltonian constraint equations given in \eqref{eq2:11mar13}.  

\begin{proposition}\label{prop1:26feb13}
On a given coordinate chart element $(U_j,\rho_j)$, the family of Hamiltonian constraint equations
\begin{align}
-\Delta_{\la}\phi + &\frac{1}{8}\la \phi + \frac{\la^4}{12}\tau^2\phi^5-\frac{1}{8}(\la^2\sigma+\mathcal{L}{\bw})_{ab}(\la^2\sigma+\mathcal{L}\bw)^{ab}\phi^{-7}-\frac{\la^2\kappa}{4}\rho\phi^{-3}=0
\end{align}
is of the form 
\begin{align}\label{eq4:26feb13}
&\quad f_1^{ab}(\la)\partial_a\partial_b\phi+f_2^a(\la)\partial_a\phi+\frac18\la\phi+\frac{\la^4}{12}\tau^2\phi^5-\frac{\phi^{-7}}{8}\left(f_3^{abcd}(\la)\partial_aw_b\partial_cw_d \right.\\
&\left.+f_4^{abc}(\la)\partial_aw_bw_c+f^{ab}_5(\la)w_aw_b+\la^2f_6^{ab}\partial_aw_b+\la^2f_7^a(\la)w_a+\la^4\sigma^2\right)-\frac{\kappa\la^2}{4}\rho\phi^{-3} = 0 \nonumber,
\end{align}
where $f^{ab}_1,...,f^{a}_7$ are functions in $C^{2,\alpha}(U_j \times (-\delta,\delta))$ that are analytic with respect to $\la\in (-\delta,\delta)$. 
\end{proposition}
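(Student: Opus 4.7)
The plan is a direct local-coordinate expansion of each geometric quantity appearing in the Hamiltonian constraint, followed by collecting terms according to their dependence on $\phi$, $\partial\phi$, $w_a$, and $\partial_a w_b$, and finally an appeal to Proposition~\ref{prop1:24may13} together with the Sobolev embedding to verify the claimed regularity and analyticity of the resulting coefficients. Throughout, I would use that $g_\la \in \cA^{s,p}$ with $s > 3 + 3/p$ (as in Theorem~\ref{thm1:31oct12}), so that $g_{ab}(x,\la) \in C^{3,\alpha}$ in $x$, with enough room for the first derivatives that appear in the Christoffel symbols to land in $C^{2,\alpha}$.

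First I would substitute the standard local expressions
\[
\Delta_\la\phi \,=\, g^{ab}(x,\la)\,\partial_a\partial_b\phi \;-\; g^{ab}(x,\la)\,\Gamma^{c}_{ab}(x,\la)\,\partial_c\phi,
\]
\[
(\mathcal{L}\bw)^{ab} \,=\, A^{ab,cd}(x,\la)\,\partial_c w_d \,+\, B^{ab,c}(x,\la)\,w_c,
\]
where $A$ and $B$ are polynomial expressions in the components of $g^{ab}$ and $\Gamma^{c}_{ab}$; the first equation immediately supplies $f_1^{ab}(\la)=-g^{ab}(x,\la)$ and $f_2^{a}(\la)=g^{bc}(x,\la)\Gamma^{a}_{bc}(x,\la)$. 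Lowering indices with $g_{ab}$ yields an analogous representation of $(\mathcal{L}\bw)_{ab}$, and the trinomial expansion of $(\la^2\sigma+\mathcal{L}\bw)_{ab}(\la^2\sigma+\mathcal{L}\bw)^{ab}$ splits into three pieces: the pure term $\la^4\sigma_{ab}\sigma^{ab}$ that supplies the $\la^4\sigma^2$ contribution; the cross term $2\la^2\sigma_{ab}(\mathcal{L}\bw)^{ab}$ that, after contracting $\sigma$ against $A$ and $B$, supplies $\la^2 f_6^{ab}\partial_a w_b + \la^2 f_7^{a} w_a$; and the pure quadratic $(\mathcal{L}\bw)_{ab}(\mathcal{L}\bw)^{ab}$, which on expansion is quadratic in $(\partial w,w)$ and hence supplies the coefficients $f_3^{abcd}$, $f_4^{abc}$, and $f_5^{ab}$ in the natural way. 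Collecting all pieces and dividing by $-8\phi^{-7}$ matches the claimed form \eqref{eq4:26feb13} term-by-term.

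For regularity and analyticity, each $f_i(x,\la)$ emerges as a polynomial combination of $g_{ab}(x,\la)$, $g^{ab}(x,\la)$, $\partial_c g_{ab}(x,\la)$, and $\sigma_{ab}(x)$. Proposition~\ref{prop1:24may13} asserts that $g_{ab}$, $\partial_c g_{ab}$, and $\Gamma^{c}_{ab}$ are all analytic in $\la$; analyticity of $g^{ab}$ follows from the analyticity of matrix inversion on the open set of invertible symmetric matrices, and polynomial combinations preserve analyticity, so each $f_i$ is analytic in $\la\in(-\delta,\delta)$. The $C^{2,\alpha}(U_j\times(-\delta,\delta))$ regularity is then obtained by combining the Sobolev embedding $W^{s,p}\hookrightarrow C^{2,\alpha}$ applied to the first derivatives of $g_{ab}(\cdot,\la)$ with the $C^{2}$-convergence statement recorded at the end of Remark~\ref{rem1:26feb13}. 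The main (minor) obstacle is purely bookkeeping: keeping the index gymnastics organized when extracting the specific coefficients $f_3^{abcd},\ldots,f_7^{a}$ from the expansion of $(\mathcal{L}\bw)_{ab}(\mathcal{L}\bw)^{ab}$; once this is carried out, the regularity and analyticity assertions reduce to citations of Proposition~\ref{prop1:24may13}, the Sobolev embedding, and the analyticity of matrix inversion.
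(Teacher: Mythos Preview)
Your proposal is correct and follows essentially the same approach as the paper: expand the Laplace--Beltrami operator and the conformal Killing term in local coordinates, group terms according to their dependence on $\partial\partial\phi$, $\partial\phi$, $\partial w$, $w$, and then invoke Proposition~\ref{prop1:24may13} to conclude analyticity of the resulting coefficient functions. Your additional remarks on the analyticity of matrix inversion for $g^{ab}$ and the use of the Sobolev embedding for the $C^{2,\alpha}$ regularity make explicit points the paper leaves implicit, but the route is the same (note only that your phrase ``dividing by $-8\phi^{-7}$'' should read ``factoring out $-\phi^{-7}/8$'').
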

\begin{proof}
To obtain the form \eqref{eq4:26feb13}, write
\begin{align}
(\cL\bw)_{ab}(\cL\bw)^{ab} = \left(\partial_aw_b+\partial_bw_a-2\Gamma^c_{~ab}w_c-\frac23g_{ab}g^{cd}\partial_cw_d + \frac23 g_{ab}g^{dc}\Gamma^e_{~cd}w_e\right)\\
\times~~~ \left(g^{ac}g^{bd}\left(\partial_cw_d+\partial_dw_c-2\Gamma^e_{~cd}w_e-\frac23g_{cd}g^{ef}\partial_ew_f + \frac23 g_{cd}g^{fe}\Gamma^h_{~ef}w_h\right)\right)\nonumber 
\end{align}
in local coordinates, expand and group terms.  Similarly, we write $\sigma_{ab}(\cL\bw)^{ab}$ in local coordinates and then
expand.  Combining these expansions we have that \\
${(\la^2\sigma+\cL\bw)_{ab}(\la^2\sigma+\cL\bw)^{ab}}$ has the form of the expression
in the parenthesis in \eqref{eq4:26feb13}.  Writing out the local representation of the Laplace-Beltrami operator then implies that Hamiltonian
constraint has the form of Eq. \eqref{eq4:26feb13}.  That the functions $f^{ab}_1,...,f^{a}_7$ are analytic then follows from Proposition~\ref{prop1:24may13} and
and the fact that $f^{ab}_1,\cdots, f^{a}_7$ are formed from sums, products, and coordinate derivatives of the metric.  
\end{proof}

We have a similar result concerning the local representation of the family of momentum constraint equations given in \eqref{eq2:11mar13}.
\begin{proposition}\label{prop2:26feb13}
On a given coordinate chart element $(U_j,\rho_j)$, the family of momentum constraint equations
\begin{align}
\mathbb{L}_{\la}\bw +\frac{2\la^2}{3}D_{\la}\tau\phi^6+\la^2\kappa\bj =0 
\end{align}
is of the form
\begin{align}\label{eq3:26feb13}
h_1^{abcd}(\la)\partial_a\partial_bw_c+h_2^{abd}(\la)\partial_aw_b+h^{ad}_3(\la)w_a+\frac23\la^2h_4^{ad}(\la)\partial_a\tau\phi^6+\la^2\kappa j^d=0,
\end{align} 
where $h^{abcd}_1,..., h_4^{ad} \in C^{2,\alpha}(U_j\times (-\delta,\delta))$ and are analytic with respect to $\la \in (-\delta,\delta)$. 
\end{proposition}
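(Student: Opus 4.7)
The plan is to expand $\mathbb{L}_\la \bw + \tfrac{2\la^2}{3}D_\la\tau\,\phi^6 + \la^2\kappa\bj$ explicitly in a coordinate chart and then invoke Proposition~\ref{prop1:24may13} to conclude that every coefficient so produced is analytic in $\la$ and lies in $C^{2,\alpha}(U_j\times(-\delta,\delta))$. The first step is to write the conformal Killing operator in local coordinates,
\begin{align*}
(\cL_\la \bw)^{ab} = g_\la^{ac}g_\la^{bd}\Bigl(\partial_c w_d + \partial_d w_c - 2\Gamma^{\,e}_{cd}(\la)\,w_e - \tfrac{2}{3}g^\la_{cd}g_\la^{ef}\partial_e w_f + \tfrac{2}{3}g^\la_{cd}g_\la^{ef}\Gamma^{\,h}_{ef}(\la)w_h\Bigr),
\end{align*}
where $\Gamma^{\,c}_{ab}(\la)$ are the Christoffel symbols of $g_\la$. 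This is a bilinear algebraic combination of the metric components, their inverses, and the Christoffel symbols, each multiplied by $w_a$ or $\partial_a w_b$.

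Next I would apply $-D_b^\la$ (using $\mathbb{L}_\la \bw = -D_b^\la (\cL_\la\bw)^{\cdot\, b}$) and compute
\begin{align*}
(\mathbb{L}_\la \bw)^d = -\partial_b(\cL_\la \bw)^{db} - \Gamma^{\,d}_{bc}(\la)(\cL_\la \bw)^{cb} - \Gamma^{\,b}_{bc}(\la)(\cL_\la \bw)^{dc}.
\end{align*}
Distributing the $\partial_b$ through the expression for $(\cL_\la \bw)^{ab}$ produces three sorts of contributions: (i) terms in which $\partial_b$ lands on $\partial_c w_d$ or $\partial_d w_c$, producing second derivatives $\partial_a\partial_b w_c$ multiplied by a polynomial in $g_\la^{ij}$; (ii) terms in which $\partial_b$ lands on a metric component or a Christoffel symbol, producing first derivatives $\partial_a w_b$ times coefficients that are polynomial in $g_\la^{ij}$, $\Gamma^{\,c}_{ab}(\la)$, and $\partial_b g^\la_{ij}$; and (iii) terms containing $w_a$ itself multiplied by polynomials in these same quantities together with one more derivative, i.e.\ $\partial_b\Gamma^{\,c}_{ab}(\la)$. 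Grouping by the order of derivative hitting $w$ yields
\begin{align*}
(\mathbb{L}_\la \bw)^d = h_1^{abcd}(\la)\partial_a\partial_b w_c + h_2^{abd}(\la)\partial_a w_b + h_3^{ad}(\la)w_a,
\end{align*}
where each $h_i^{\,\cdots\,}(\la)$ is an explicit polynomial expression in $g^\la_{ij}$, $g_\la^{ij}$, the Christoffel symbols $\Gamma^{\,c}_{ab}(\la)$, and their first coordinate derivatives.

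The term $\tfrac{2\la^2}{3}D_\la\tau\,\phi^6$ is handled by noting that on a scalar $\tau$ one has $(D_\la \tau)^d = g_\la^{ad}\partial_a\tau$, so setting $h_4^{ad}(\la) := g_\la^{ad}$ gives the remaining term in the stated form, and the given source $\la^2\kappa j^d$ requires no manipulation. The final step is to read off regularity: by Proposition~\ref{prop1:24may13} the components $g^\la_{ab}(x,\la)$, and hence their inverses, their first and second coordinate derivatives, and the Christoffel symbols $\Gamma^{\,c}_{ab}(\la)$ together with their coordinate derivatives, are analytic in $\la \in (-\delta,\delta)$ as elements of $C^{2,\alpha}(U_j)$ (using $s>2+3/p$ and Sobolev embedding). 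Since each $h_i$ is a polynomial in these analytic, $C^{2,\alpha}$ quantities, and polynomials preserve both analyticity in $\la$ and membership in the Banach algebra $C^{2,\alpha}(U_j)$, we obtain $h_i \in C^{2,\alpha}(U_j\times(-\delta,\delta))$ with analytic $\la$-dependence, as required.

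The main obstacle is not conceptual but bookkeeping: one must carefully track which terms come from $\partial_b$ acting on the various factors in $(\cL_\la\bw)^{ab}$ and on the Christoffel contractions arising from $D_b^\la$, and verify that no term with higher-than-second derivatives of $w$ or higher-than-first derivatives of $\tau$ survives. Since the covariant derivative introduces at most one extra derivative of $g_\la$ beyond what already appears in $(\cL_\la\bw)^{ab}$, and the conformal Killing operator contains at most first derivatives of $g_\la$, every coefficient ultimately involves at most second coordinate derivatives of $g_\la$, which is precisely the regularity threshold available from $s>2+3/p$.
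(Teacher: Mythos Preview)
Your proposal is correct and follows essentially the same approach as the paper: expand $\mathbb{L}_\la\bw$ in local coordinates by writing out the conformal Killing operator and then the covariant divergence, group the resulting expression by the order of derivatives on $w$, and invoke Proposition~\ref{prop1:24may13} for the analyticity and regularity of the coefficients. The only cosmetic difference is that the paper carries out the computation with lowered indices, writing $(\mathbb{L}_\la\bw)_a = g^{bc}D_c(\cL\bw)_{ab}$, whereas you work with the contravariant version; the content is identical.
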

\begin{proof}
To obtain Eq. \eqref{eq3:26feb13}, write
\begin{align}
(\mathbb{L}_{\la}\bw)_a &= D^b(\cL\bw)_{ab} = g^{bc}D_c(\cL\bw)_{ab} = g^{bc}\left(\partial_c(\cL\bw)_{ab} - \Gamma^d_{~ca}(\cL\bw)_{db}-\Gamma^d_{~cb}(\cL\bw)_{ad}\right)\\
= &g^{bc}\left[\partial_c\left( \partial_aw_b+\partial_bw_a-2\Gamma^c_{~ab}w_c-\frac23g_{ab}g^{cd}\partial_cw_d + \frac23 g_{ab}g^{dc}\Gamma^e_{~cd}w_e   \right)\right. \nonumber\\
&-\Gamma^d_{~ca}\left(\partial_dw_b+\partial_bw_d-2\Gamma^e_{~db}w_e-\frac23g_{db}g^{ce}\partial_cw_e + \frac23 g_{bd}g^{cd}\Gamma^e_{~cd}w_e             \right) \nonumber\\
&-\left.\Gamma^d_{~cb}\left(\partial_aw_d+\partial_dw_a-2\Gamma^e_{~ad}w_e-\frac23g_{ad}g^{ce}\partial_cw_e + \frac23 g_{ad}g^{cd}\Gamma^e_{~cd}w_e             \right)\right] \nonumber
\end{align}
in local coordinates, expand and group terms.  This gives the first three terms in Eq. \eqref{eq3:26feb13}, where $h^{abcd}_1, h^{abd}_2$ and $h^{ad}_3$ are formed by
sums and products of the components of $g$ and its first and second derivatives.  Therefore by Proposition~\ref{prop1:24may13} these functions will be analytic in $\la$.    
Finally, by writing the covariant derivative $D_{\la}$ in local coordinates we obtain the result.  
\end{proof}

\subsection{Far-from-CMC Existence Results}

Here we present a Theorem from \cite{HNT09} (see also~\cite{HNT08} for the smooth case), which gives conditions for which solutions to the CTT formulation
exist without the near-CMC assumption.  

\begin{theorem}\label{farCMC}
Let $(\cM,h_{ab})$ be a 3-dimensional closed Riemannian manifold suppose that $p \in (1,\infty)$ and $s \in (1+\frac3p, \infty)$ are given.  
Let $h_{ab}\in W^{s,p}$ admit no conformal Killing field and be in $\cY^+(\cM)$, the positive Yamabe class.
Select $q$ and $e$ to satisfy:
\begin{itemize}
\item $\frac1q \in (0,1)\cap (0,\frac{s-1}{3}) \cap [\frac{3-p}{3p}, \frac{3+p}{3p}],$

\item $e \in (1+\frac3q, \infty)\cap [s-1,s] \cap [\frac3q+s-\frac3p-1, \frac3q+s-\frac3p].$
\end{itemize}
Assume that the conformal data $(\tau, \sigma, \rho, \bj)$ satisfies:
\begin{align*}
&~~~\bullet~~ \tau \in W^{e-1,q} ~\text{if $e \ge 2$, and $\tau \in W^{1,z}$ otherwise, with $z=$ \tiny{$ \frac{3q}{3+\max\{0,2-e\}q}$}},                        \\ 
&~~~\bullet~~ \text{$\sigma \in W^{e-1,q}$, with $\|\sigma\|_{\infty}$ sufficiently small, }                       \\
&~~~\bullet~~ \text{$\rho \in W^{s-2,p}_+\cap L^{\infty} \backslash \{0\}$, with $\|\rho\|_{\infty}$ sufficiently small}          \\
&~~~\bullet~~ \text{$\bj \in \bW^{e-2,q}$, with $\|\bj\|_{e-2,q}$ sufficiently small.}
\end{align*}
Then there exist $\phi \in W^{s,p}$ with $\phi>0$ and $\bw \in \bW^{e,q}$ solving the Einstein constrain equations.
\end{theorem}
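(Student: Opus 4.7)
The plan is a non-contractive, Schauder-type fixed-point argument on the Hamiltonian variable $\phi$, with the momentum constraint solved parametrically in $\phi$ at each iterate. This is the essential departure from the Isenberg--Moncrief contraction-mapping argument, which closes its loop only when $D\tau$ is small; when $\tau$ is far from constant the $\phi$-dependence of $\bw$ cannot be absorbed contractively, so a topological fixed-point theorem must be used instead.

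First I would set up the functional-analytic framework, checking that the conditions on $(p,s,q,e)$ force $W^{s,p}\hookrightarrow L^\infty$ and that the relevant pointwise products land in the appropriate low-regularity spaces. Using the positive-Yamabe-class hypothesis (after a conformal change to a metric of strictly positive scalar curvature) together with the absence of conformal Killing fields, I would then conclude that the linearizations $-\Delta+c$ and $\mathbb{L}$ are isomorphisms $W^{s,p}\to W^{s-2,p}$ and $\bW^{e,q}\to\bW^{e-2,q}$. For $\phi$ in an order interval $[\phi_-,\phi_+]\subset W^{s,p}$ with $\phi_->0$, I would solve
\[
\mathbb{L}\,\bw \;=\; -\tfrac{2}{3}\,D\tau\,\phi^{6}\;-\;\kappa\bj
\]
to obtain a unique $\bw(\phi)\in\bW^{e,q}$ with $\|\bw(\phi)\|_{\bW^{e,q}}\le C(\|\tau\|\,\phi_+^{6}+\|\bj\|)$. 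Freezing $\bw=\bw(\phi)$ in the Hamiltonian constraint produces a scalar semilinear equation for which I would invoke the method of sub- and supersolutions to obtain a solution $\phi(\bw)\in[\phi_-,\phi_+]$. Setting $S(\phi)=\phi(\bw(\phi))$ and applying Schauder's theorem to the resulting continuous, compact self-map of the convex set $[\phi_-,\phi_+]$ then yields the desired coupled solution.

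The crux of the argument, and the expected main obstacle, is the construction of \emph{global} barriers $(\phi_-,\phi_+)$ that work uniformly over the image $\{\bw(\phi):\phi\in[\phi_-,\phi_+]\}$. The subsolution is built from the positive Yamabe hypothesis and the nontriviality of $\rho\not\equiv 0$, producing a small positive constant (or positive eigenfunction of a conformally rescaled Laplacian) with the correct sign in the Hamiltonian constraint. The supersolution is a constant chosen so that $\tfrac{1}{8}R\phi_++\tfrac{1}{12}\tau^{2}\phi_+^{5}$ dominates $\tfrac{1}{8}|\sigma+\cL\bw|^{2}\phi_+^{-7}+\tfrac{\kappa}{4}\rho\phi_+^{-3}$. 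The difficulty is that $\bw(\phi)$ scales like $\phi^{6}$, so $|\cL\bw|^{2}\phi_+^{-7}$ scales like $\phi_+^{5}$, placing it in direct competition with the stabilizing $\tau^{2}\phi_+^{5}$ term; the supersolution inequality therefore closes only when $\|\sigma\|_\infty$, $\|\rho\|_\infty$, and $\|\bj\|_{\bW^{e-2,q}}$ are sufficiently small relative to the background geometry. These are precisely the smallness hypotheses appearing in the statement, and they are the substitute for the near-CMC smallness of $D\tau$. Continuity of $S$ follows from continuous dependence of linear elliptic solutions on their data, and precompactness of the image from the a priori $W^{s,p}$ bound combined with the compact embedding $W^{s,p}\hookrightarrow W^{s',p}$ for $s'<s$, which completes the Schauder argument.
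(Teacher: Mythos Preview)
The paper does not prove this theorem at all: it is quoted in the appendix as a result from \cite{HNT09} (with \cite{HNT08} for the smooth case) and is used as a black box in the proof of Theorem~\ref{thm1:31oct12}. There is therefore nothing in the present paper to compare your argument against.

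That said, your outline is essentially the strategy of the cited references. The HNT argument is precisely a coupled Schauder fixed-point scheme in which one solves the momentum constraint for $\bw(\phi)$, feeds this into the Hamiltonian constraint, and uses global sub- and supersolutions together with compactness to close the loop. You have also correctly identified the central difficulty: because $|\cL\bw(\phi)|^{2}\phi^{-7}$ scales like $\phi^{5}$, it competes with the stabilizing $\tau^{2}\phi^{5}$ term, so a constant supersolution cannot be obtained from the balance of powers alone and one must instead impose smallness on $\sigma$, $\rho$, and $\bj$. One refinement worth noting is that in \cite{HNT08,HNT09} the global supersolution is not merely a large constant but is constructed so that the constant remains bounded as the data shrink; this is what allows the subsolution (which is driven by $\rho\not\equiv 0$) to sit below it. Your sketch is otherwise an accurate summary of the original proof.
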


\bibliographystyle{abbrv}
\bibliography{Caleb,Caleb2,Caleb3}


\vspace*{-0.1cm}

\end{document}